\documentclass[journal]{IEEEtran}
\usepackage{cite}
\usepackage{amsmath,amssymb,amsfonts}
\usepackage{amsthm}
\usepackage{algorithmic}
\usepackage{graphicx}
\usepackage[caption=false,font=footnotesize]{subfig}
\usepackage{textcomp}
\usepackage{xcolor}
\usepackage[scaled]{helvet} 
\usepackage{balance}
\usepackage{makecell}
\usepackage{makecell}
\usepackage{xcolor} 

\newtheorem{remark}{Remark}

\newtheorem{lemma}{\bfseries Lemma}

\def\BibTeX{{\rm B\kern-.05em{\sc i\kern-.025em b}\kern-.08em
    T\kern-.1667em\lower.7ex\hbox{E}\kern-.125emX}}
\begin{document}




\title{
    Asymptotically Optimal Local Receiver in Uplink CF-mMIMO: A Functional-Variational Analysis}
  \author{Jiafei Fu,~\IEEEmembership{Graduate Student Member,~IEEE}
  , Peng Jiang,~\IEEEmembership{Graduate Student Member,~IEEE}, \\Dongming Wang,~\IEEEmembership{Member,~IEEE},
  Pengcheng Zhu,~\IEEEmembership{Member,~IEEE}, 
\thanks{This work was supported by Mobile Information Networks - National Science and Technology Major Projects of China under Grant 2025ZD1301800
and by the National Natural Science Foundation of China under Grant
62531003.}
\thanks{Jiafei Fu, Peng Jiang, Dongming Wang, and Pengcheng Zhu are with National Mobile Communications Research Laboratory, Southeast University, Nanjing 210096,
China. (e-mails: 
\{fujfei, jiangpeng98, wangdm, p.zhu\}@seu.edu.cn). Corresponding author: Pengcheng Zhu.}
}

\maketitle

\begin{abstract}
In cell-free massive multiple-input multiple-output (CF-mMIMO) systems, the canonical uplink local receiver is the local minimum mean square error (LMMSE) receiver with large-scale fading decoding (LSFD) at the central processing unit (CPU). The LSFD coefficients are derived under the use-and-then-forget (UatF) lower bound of the ergodic rate, and computing these coefficients introduces additional fronthaul overhead and computational complexity at the CPU. This paper investigates local receiver design directly from the true ergodic-rate objective under perfect local channel state information (CSI). By introducing an expectation-based constraint and leveraging large-system random matrix theory, we develop a functional-variational approach that yields the asymptotically optimal quasi-LMMSE (Q-LMMSE) receiver in closed form. A key insight is that the Q-LMMSE receiver shares the same direction as the conventional LMMSE receiver, differing only by an instantaneous CSI-dependent scalar, and thus incurs the same per-access point (AP) complexity. More importantly, this scalar varies across APs and implicitly provides adaptive weighting for the direct summation at the CPU, thereby completely eliminating the need for statistical LSFD coefficients and the associated CPU-side computational overhead. Numerical results demonstrate that the proposed Q-LMMSE receiver consistently outperforms the LMMSE-LSFD benchmark in terms of the ergodic rate, achieving approximately a {5\%} gain when the number of antennas per AP is low, while operating with strictly lower system-level complexity.
\end{abstract}

\begin{IEEEkeywords}
cell-free massive MIMO, ergodic rate, local receiver, uplink reception, functional-variational analysis, large-system approximation.
\end{IEEEkeywords}

\section{Introduction}
{
 The relentless growth of mobile data traffic and the emergence of new application scenarios---such as ultra-reliable low-latency communications, massive machine-type communications, and enhanced mobile broadband---have driven the wireless community to seek fundamentally new network architectures beyond the traditional cellular paradigm~\cite{Michail2021CM64,You2020SCIS,chengxiang2023CST905}. 
 However, in conventional cellular massive multiple-input multiple-output (MIMO), cell-edge users still suffer from pronounced inter-cell interference and degraded signal quality due to their distance from the serving base station~\cite{Peng2026tsp2224}

Cell-free massive MIMO (CF-mMIMO) has emerged as a promising user-centric architecture to address this limitation~\cite{NGO2017TWC,Ammar2022COMST}. In a CF-mMIMO system, a large number of geographically distributed access points (APs) jointly serve a smaller set of user equipment (UEs) over the same time-frequency resources, ensuring that every user is surrounded by several nearby APs and thus experiences favorable channel conditions regardless of its physical location~\cite{NGO2017TWC}. 
By leveraging macro-diversity across distributed APs and coherent joint processing, CF-mMIMO can substantially improve coverage uniformity, spectral efficiency, and interference resilience compared with conventional small-cell deployments~\cite{NGO2017TWC,DEMIR2024JSAC}, positioning it as a key enabling technology for beyond-5G and 6G wireless systems~\cite{ning2025chinaCommunications1,zhang2024INETWORK247,Elhoushy2022csto492}. 

{However, this architectural shift from co-located to distributed antenna deployment introduces fundamental challenges for uplink signal processing. While a fully centralized implementation---where all APs forward raw received signals to a central processing unit (CPU) for joint detection---can theoretically achieve optimal performance, it imposes prohibitive fronthaul overhead and computational complexity. To ensure scalability, practical CF-mMIMO systems typically adopt a distributed processing paradigm, where each AP performs local signal detection using only its locally available channel state information (CSI), thereby significantly reducing fronthaul and computational requirements~\cite{Jiayi2026tcom8459}.}

The dominant analytical framework for local receiver design in CF-mMIMO has been the use-and-then-forget (UatF) bounding technique~\cite{NGO2017TWC,OZDOGAN2019TWC}. Within this framework, the local minimum mean squared error (LMMSE) receiver followed by large-scale fading decoding (LSFD) at the CPU---often referred to as LMMSE-LSFD---has emerged as the canonical receiver structure~\cite{BJORNSON2021Book,Yu2026tvt1671}. Due to its analytical tractability and computational efficiency, the LMMSE-LSFD architecture has been extensively studied and successfully extended to a wide range of practical scenarios, including fronthaul-constrained deployments, Rician fading, multi-antenna users, and hardware impairments~\cite{DEMIR2024JSAC,OZDOGAN2019TWC,XIE2024TWC,Tentu2023tcom5455}.

In parallel, recent advances have further extended the uplink combining design to diverse practical scenarios. On the algorithmic front, iteratively weighted MMSE based uplink precoding and combining designs have been developed to improve spectral efficiency~\cite{Wang2023TCOM}, and optimal bilinear equalizer beamforming has been proposed for both uplink and downlink CF-mMIMO with arbitrary channel estimators~\cite{Wang2025TVT}. Beyond algorithmic refinements, emerging system architectures have introduced new combining challenges and opportunities. For instance, distributed combining schemes for near-field cell-free extremely large-scale MIMO systems have been investigated to reduce computational complexity while maintaining performance~\cite{Wang2026TWC_NearField}. The integration of CF-mMIMO with reconfigurable intelligent surfaces (RIS) has led to novel combining designs, including reflection pattern modulation-aided RIS-assisted schemes~\cite{Sui2024TCOM} and electromagnetic interference-aware receivers~\cite{Shi2023JSAC}. Dynamic time division duplexing-enabled cell-free systems with joint sensing and communication have further expanded the scope of uplink receiver design~\cite{Chowdhury2025TWC}. Complementing these architectural innovations, uplink resource allocation optimization for user-centric CF-mMIMO networks has been studied to balance performance and scalability~\cite{Li2024TWC}. 

However, despite the richness of these designs and the dominance of the UatF-based approach, the framework itself suffers from several fundamental limitations. 
First, the UatF lower bound is a conservative surrogate of the true ergodic achievable rate; the gap between the two can be non-negligible, particularly in interference-limited regimes or when channel hardening is not fully realized~\cite{NGO2017TWC,BJORNSON2017TIT}. 
{Second, the LSFD stage requires the CPU to collect local estimates from all APs and compute optimal combining coefficients based on long-term channel statistics, which introduces additional fronthaul overhead and computational burden~\cite{VanChien2021TWC,Bjornson2020TCOM}.} Third, although the LMMSE-LSFD structure has been widely adopted as the de facto receiver under the UatF framework, a rigorous optimality proof---establishing that LMMSE-LSFD is indeed the optimal local combiner under the UatF objective---has, to the best of our knowledge, not been reported. Prior works have largely taken the optimality of local MMSE combining as given, without providing a formal analysis that confirms this choice.

These observations motivate the questions addressed in this paper: \emph{Can we design local receivers directly from the ergodic-rate objective that go beyond the UatF surrogate? {And can we rigorously prove the optimality of the UatF-based receiver?}} {Actually, direct optimization of the ergodic rate has a rich history in multiuser MIMO literature, largely enabled by advances in random matrix theory and deterministic-equivalent analysis{\cite{couillet2011CUP,Sanguinetti2022FnT}}. These mathematical frameworks provide powerful tools to approximate random functionals of large-dimensional channel matrices by deterministic quantities that depend only on channel statistics~\cite{ZHANG2013TWC1536,Jiafei2026twc12548,Jiangpeng2025TWC}. 
Leveraging these techniques, researchers have successfully analyzed the asymptotic performance of linear precoding~\cite{Hoydis2013JSAC,Wagner2012TIT4509}, established the fundamental capacity limits of massive MIMO in the large-system regime~\cite{Marzetta2010TWC,BJORNSON2017TIT}, and developed energy-efficient transmission designs~\cite{You2020TWC,Bjornson2015TWC} as well as advanced beamforming schemes exploiting statistical CSI~\cite{Li2016TCOMM}.
Despite these advances in transmit-side optimization and co-located massive MIMO systems, the direct design of \emph{local receivers} from the ergodic-rate objective in CF-mMIMO has received comparatively little attention.} 

\subsection{Contributions and Paper Organization}

This paper addresses the above gap by designing uplink local receivers directly from the true ergodic-rate objective under the assumption of perfect local CSI at each AP. 
{By introducing an expectation-based surrogate constraint to replace the pointwise normalization in the instantaneous signal-to-interference-plus-noise ratio (SINR), we transform the original intractable problem into a large-system asymptotic formulation.} Through a rigorous variational analysis using functional derivatives, we derive the first-order stationarity conditions and obtain a quasi-LMMSE (Q-LMMSE) receiver with a deterministic-equivalent scalar characterization. {The cross-AP coupling terms, which capture the interference from other APs, are shown to vanish in the conditional mean under the large-system limit.} An important practical aspect of the proposed design is its computational efficiency. By exploiting the matrix inversion lemma, the proposed receiver can be computed from the standard LMMSE inverse with a instantaneous CSI-based scalar, maintaining the same per-AP complexity as the conventional LMMSE receiver while completely avoiding LSFD combination at the CPU. This is a significant advantage over the LMMSE-LSFD architecture, which requires additional CPU-side processing and additional fronthaul signaling.



The main contributions of this paper are summarized as follows:

\begin{itemize}
    \item We develop a functional-variational framework for direct ergodic-rate-oriented receiver design in CF-mMIMO. By introducing an expectation-based surrogate constraint and leveraging large-system random matrix approximations, we derive the first-order stationarity conditions and obtain an asymptotically optimal Q-MMSE receiver under the ergodic rate objective.
    \item {Using functional-variational analysis, we also formally establish that local MMSE combining followed by LSFD at the CPU is indeed the optimal receiver structure when the UatF bound is adopted as the performance metric.}
    \item {Via the matrix inversion lemma, we show that the Q-LMMSE receiver incurs the same per-AP computational complexity as standard LMMSE receiver while completely avoiding LSFD combining in the CPU, therefore achieving significant computational savings.}
    \item Numerical results across various parameter configurations demonstrate that the proposed Q-LMMSE receiver consistently outperforms the LMMSE-LSFD benchmark in terms of the ergodic rate.
\end{itemize}

The remainder of this paper is organized as follows. Section~\ref{sec:sysModel} introduces the system model. Section~\ref{sec:prelim} reviews the conventional uplink receiver in CF-mMIMO systems. Section~\ref{sec:opt_Elog} formulates the ergodic-rate-oriented receiver design problem, develops the large-system approximation and a functional-variational-based solution.
Section~\ref{sec:simu} presents numerical results, and Section~\ref{sec:conclusion} concludes the paper.

{Notations}: $\mathbb{C}$ denotes the set of complex numbers. Bold lowercase and uppercase letters denote vectors and matrices, respectively. $\mathbf{I}_N$ is the $N\times N$ identity matrix, $\mathbf{0}$ is the all-zero vector, and $\mathrm{diag}(\cdot)$ constructs a block-diagonal matrix. $\mathbf{A}^H$ and $\mathbf{A}^{-1}$ denote the conjugate transpose and inverse of $\mathbf{A}$, respectively. $\|\cdot\|$ is the Euclidean norm while $\mathcal{CN}(\boldsymbol{\mu}, \mathbf{\Sigma})$ denotes a circularly symmetric complex Gaussian distribution with mean $\boldsymbol{\mu}$ and covariance $\mathbf{\Sigma}$. $\mathbb{E}[\cdot]$ is the expectation operator, $\mathcal{R}\{\cdot\}$ is the real part operator, and $p(\cdot)$ denotes a probability density function. $\mathrm{vec}(\mathbf{A})$ denotes the vectorization of $\mathbf{A}$. $\mathcal{O}(\cdot)$ denotes the big-O notation for asymptotic complexity.

\section{System Model}\label{sec:sysModel}
We consider the uplink of a CF-mMIMO system comprising $M$ APs and $K$ single-antenna UEs. Each AP is equipped with $N$ antennas, and all APs are connected to a CPU via fronthaul links. Let $\mathcal{M} = \{1, 2, \ldots, M\}$ and $\mathcal{K} = \{1, 2, \ldots, K\}$ denote the sets of APs and UEs, respectively. The channel between UE $k$ and AP $m$ is denoted by $\mathbf{h}_{km} \in \mathbb{C}^{N}$, which is assumed to be perfectly known at AP $m$ and follows a complex Gaussian distribution:
\begin{equation}
\mathbf{h}_{km} \sim \mathcal{CN}(\mathbf{0}, \mathbf{R}_{km}),
\end{equation}
where $\mathbf{R}_{km} \in \mathbb{C}^{N \times N}$ denotes the spatial correlation matrix. We assume that the channels are independent across different UEs and APs. Let us define the aggregate channel matrix for AP $m$ as 
\begin{equation}
    \mathbf{H}_m = [\mathbf{h}_{1m}, \mathbf{h}_{2m}, \ldots, \mathbf{h}_{Km}] \in \mathbb{C}^{N \times K}, 
\end{equation}
which collects the channels from all UEs to AP $m$. Since $\mathbf{H}_m$ is a Gaussian random matrix with independent columns, its probability density function (PDF) can be expressed as
\begin{equation}    
p(\mathbf{H}_m) = \frac{1}{\pi^{NK} \prod_{k=1}^K \det(\mathbf{R}_{km})} \exp\left(-\sum_{k=1}^K \mathbf{h}_{km}^H \mathbf{R}_{km}^{-1} \mathbf{h}_{km}\right).
\end{equation}
By vectorizing $\mathbf{H}_m$ as $\mathrm{vec}(\mathbf{H}_m)= [\mathbf{h}_{1m}^T, \mathbf{h}_{2m}^T, \ldots, \mathbf{h}_{Km}^T]^T \in \mathbb{C}^{NK}$, the PDF can be equivalently written in compact form as
\begin{equation}    
    \begin{aligned}
        p(\mathrm{vec}(\mathbf{H}_m)) = &\frac{1}{\pi^{NK} \prod_{k=1}^K \det(\mathbf{R}_{km})} \\
        &\times \exp\left(-\mathrm{vec}(\mathbf{H}_m)^H \mathbf{R}_m^{-1} \mathrm{vec}(\mathbf{H}_m)\right),
    \end{aligned}
\end{equation}
where $\mathbf{R}_m = \mathrm{diag}(\mathbf{R}_{1m}, \mathbf{R}_{2m}, \ldots, \mathbf{R}_{Km}) \in \mathbb{C}^{NK \times NK}$ denotes the block-diagonal covariance matrix of the vectorized channel. 

The received signal at AP $m$ can be expressed as
\begin{equation}
\mathbf{y}_m = \sum_{k=1}^{K} \mathbf{h}_{km} \sqrt{p_k} s_k + \mathbf{n}_m,
\end{equation}
where $p_k$ denotes the transmit power of UE $k$, $s_k$ is the transmitted symbol satisfying $\mathbb{E}[|s_k|^2] = 1$, and $\mathbf{n}_m \sim \mathcal{CN}(\mathbf{0}, \sigma^2 \mathbf{I}_N)$ represents the additive white Gaussian noise (AWGN) at AP $m$. 

Each AP $m$ applies a local receive vector $\mathbf{v}_{km} \in \mathbb{C}^{N}$ to obtain a local estimate of the symbol transmitted by UE $k$, denoted as
\begin{equation}
\hat{s}_{km} = \mathbf{v}_{km}^H \mathbf{y}_m.
\end{equation}

The CPU then combines the local estimates from all APs to produce the final decision statistic for UE $k$.
The local receive vector $\mathbf{v}_{km}$ is designed by AP $m$ based on its locally available channel state information $\mathbf{H}_m$ and a set of statistical parameters, formulated as
\begin{equation}
\mathbf{v}_{km} = \mathbf{f}_{km}(\mathbf{H}_m, \mathcal{S}),
\end{equation}
where the local receiving function $\mathbf{f}_{km}(\cdot)$ satisfies the following assumptions:
\begin{itemize}
    \item[\textit{A1)}] $\mathcal{S}$ denotes a set of statistical information that is common to all APs, including $\{\mathbf{R}_{jm}\}_{j\in\mathcal{K}}$, the noise variance $\sigma^2$, and the power allocation coefficients $\{p_j\}_{j\in\mathcal{K}}$. These quantities are independent of the instantaneous channel realizations.
    \item[\textit{A2)}] The local receive vector $\mathbf{v}_{km}$ depends only on the local channel $\mathbf{H}_m$ and is statistically independent of the channels and receiving vectors at other APs, i.e., $\mathbf{v}_{km} \perp\!\!\!\perp \{\mathbf{H}_l, \mathbf{v}_{jl}\}$ for all $l \neq m$, $j \in \mathcal{K}$.
    \item[\textit{A3)}] The function $\mathbf{f}_{km}(\cdot)$ maps the local channel matrix to a combining vector, i.e., $\mathbf{f}_{km}: \mathbb{C}^{N\times K} \rightarrow \mathbb{C}^{N}$, and belongs to the Hilbert space $L^2(\mathbb{C}^{N\times K}, p(\mathbf{H}_m); \mathbb{C}^N)$. This implies square-integrability with respect to the probability measure: $\int_{\mathbb{C}^{N\times K}} \|\mathbf{f}_{km}(\mathbf{H}_m)\|^2 p(\mathbf{H}_m) \, d\mathbf{H}_m < \infty$.
\end{itemize}


The design of the local receiving functions $\{\mathbf{f}_{km}(\cdot)\}$ constitutes the main focus of this paper, as they directly determine the ergodic achievable rates. We assume perfect local CSI at each AP, which enables us to characterize the fundamental performance limits of local receiver design. The insights derived from this analysis can also guide the development of robust receivers under imperfect CSI in practical scenarios.

{

\section{{Preliminaries on Uplink Receiver in CF-mMIMO Systems}}\label{sec:prelim}

In this section, we review the conventional uplink receiver architectures in CF-mMIMO systems, starting from the theoretically optimal centralized approach and then discussing the practical distributed paradigm. These preliminaries provide the foundation for understanding the motivation behind the proposed ergodic-rate-oriented local receiver design.

\subsection{Centralized Uplink Reception}

In a fully centralized uplink reception architecture, all APs forward their raw received signals to the CPU, which performs joint signal detection using the global CSI. Let us define the aggregate received signal vector across all APs as
\begin{equation}
    \mathbf{y} = [\mathbf{y}_1^T, \mathbf{y}_2^T, \ldots, \mathbf{y}_M^T]^T \in \mathbb{C}^{MN},
\end{equation}
and the aggregate channel matrix from all UEs to all APs as
\begin{equation}
    \mathbf{H} = [\mathbf{H}_1^T, \mathbf{H}_2^T, \ldots, \mathbf{H}_M^T]^T \in \mathbb{C}^{MN \times K}.
\end{equation}
The global received signal model can then be written as
\begin{equation}
    \mathbf{y} = \mathbf{H} \mathbf{P}^{1/2} \mathbf{s} + \mathbf{n},
\end{equation}
where $\mathbf{P} = \mathrm{diag}(p_1, \ldots, p_K)$ is the diagonal power allocation matrix, $\mathbf{s} = [s_1, \ldots, s_K]^T$ is the transmitted symbol vector, and $\mathbf{n} = [\mathbf{n}_1^T, \ldots, \mathbf{n}_M^T]^T \sim \mathcal{CN}(\mathbf{0}, \sigma^2 \mathbf{I}_{MN})$ is the aggregate noise vector.

The CPU applies a global receive combining matrix $\mathbf{v}_k \in \mathbb{C}^{MN}$ to detect the symbol of UE $k$, yielding the decision statistic
\begin{equation}
    \hat{s}_k = \mathbf{v}_k^H \mathbf{y}.
\end{equation}
The instantaneous SINR for UE $k$ under centralized reception is given by
\begin{equation}\label{sinr_centralized}
    \mathrm{SINR}_k^{\mathrm{C}} = \frac{p_k \left| \mathbf{v}_k^H \mathbf{h}_k \right|^2}{\sum_{i \neq k} p_i \left| \mathbf{v}_k^H \mathbf{h}_i \right|^2 + \sigma^2 \|\mathbf{v}_k\|^2},
\end{equation}
where $\mathbf{h}_k = [\mathbf{h}_{k1}^T, \ldots, \mathbf{h}_{kM}^T]^T \in \mathbb{C}^{MN}$ denotes the aggregate channel vector from UE $k$ to all APs.

To maximize the instantaneous SINR \eqref{sinr_centralized}, the CPU solves the following generalized Rayleigh quotient optimization problem:
\begin{equation}
    \max_{\mathbf{v}_k \in \mathbb{C}^{MN}} \quad \frac{p_k \left| \mathbf{v}_k^H \mathbf{h}_k \right|^2}{\sum_{i \neq k} p_i \left| \mathbf{v}_k^H \mathbf{h}_i \right|^2 + \sigma^2 \|\mathbf{v}_k\|^2}.
\end{equation}
This is a standard generalized eigenvalue problem, and the globally optimal solution is the centralized MMSE (C-MMSE) receiver:
\begin{equation}\label{v_centralized}
    \mathbf{v}_k^{\mathrm{C-MMSE}} = \left( \sum_{i=1}^K p_i \mathbf{h}_i \mathbf{h}_i^H + \sigma^2 \mathbf{I}_{MN} \right)^{-1} \mathbf{h}_k.
\end{equation}
Equivalently, defining the global interference-plus-noise covariance matrix as $\mathbf{W} = \sum_{i=1}^K p_i \mathbf{h}_i \mathbf{h}_i^H + \sigma^2 \mathbf{I}_{MN} \in \mathbb{C}^{MN \times MN}$, the optimal receiver can be compactly written as $\mathbf{v}_k^{\mathrm{C-MMSE}} = \mathbf{W}^{-1} \mathbf{h}_k$.

The centralized MMSE receiver \eqref{v_centralized} achieves the maximum possible instantaneous SINR for each UE by exploiting the full global CSI $\mathbf{H}$ and performing joint detection across all APs. However, this theoretical optimality comes at a prohibitive cost in large-scale CF-mMIMO deployments:
\begin{itemize}
    \item \textbf{Fronthaul overhead}: Each AP must forward its $N$-dimensional received signal $\mathbf{y}_m$ to the CPU, resulting in a total fronthaul load of $MN$ complex scalars per channel use. For systems with hundreds of APs, this requirement far exceeds practical fronthaul capacity.
    \item \textbf{Computational complexity}: The CPU must invert an $MN \times MN$ matrix, incurring a complexity of $\mathcal{O}((MN)^3)$. This cubic scaling with the total number of antennas renders centralized processing computationally infeasible for large-scale deployments.
    \item \textbf{CSI acquisition burden}: The CPU must acquire and maintain the global channel matrix $\mathbf{H}$, which requires instantaneous CSI sharing among all APs and introduces significant signaling latency.
\end{itemize}
These fundamental scalability constraints motivate the adoption of distributed processing architectures, where each AP performs local signal detection using only its locally available CSI, thereby significantly reducing both fronthaul and computational requirements.

\subsection{Distributed Uplink Reception via LMMSE-LSFD}

The LMMSE-LSFD architecture adopts a two-stage processing paradigm that balances performance with implementation complexity.
In the first stage, each AP $m$ independently computes a local receive vector $\mathbf{v}_{km} \in \mathbb{C}^{N}$ based solely on its locally available channel $\mathbf{H}_m$, and produces a local symbol estimate: $\hat{s}_{km} = \mathbf{v}_{km}^H \mathbf{y}_m$.
In the second stage, the CPU combines the local estimates from all APs using LSFD coefficients $\mathbf{a}_k = [a_{k1}, \ldots, a_{kM}]^T \in \mathbb{C}^{M}$ to produce the final decision statistic:
\begin{equation}
    \hat{s}_k = \sum_{m=1}^{M} a_{km} \hat{s}_{km} = \mathbf{a}_k^H \hat{\mathbf{s}}_k,
\end{equation}
where $\hat{\mathbf{s}}_k = [\hat{s}_{k1}, \ldots, \hat{s}_{kM}]^T$ collects the local estimates across all APs.

Under the UatF analytical framework, the optimal local receive vector at each AP is derived by maximizing the UatF lower bound on the capacity. The resulting local receiver takes the standard LMMSE form:
\begin{equation}\label{v_lmmse_local}
    \mathbf{v}_{kl}^{\mathrm{LMMSE}} = \left( \sum_{i=1}^K p_i \mathbf{h}_{il} \mathbf{h}_{il}^H + \sigma^2 \mathbf{I}_N \right)^{-1} \mathbf{h}_{kl}, \quad \forall l \in \mathcal{M}.
\end{equation}
This local LMMSE treats all other users as interference and requires only the local channel $\mathbf{H}_l$, making it computationally efficient with per-AP complexity $\mathcal{O}(N^3 + KN^2)$.
\begin{proof}
    A detailed proof of the LMMSE optimality based on functional-variational theory is given in Appendix~\ref{app:uatf_proof}.
\end{proof}

Given the local LMMSE reveiver \eqref{v_lmmse_local}, the UatF-based SINR for UE $k$ can be expressed as a deterministic function of the LSFD coefficients:
\begin{equation}\label{sinr_uatf}
   \begin{aligned}
    &\mathrm{SINR}_k^{\mathrm{UatF}} \\&= \frac{p_k \mathbf{a}_k^H \mathbb{E}[\mathbf{g}_{kk}] \mathbb{E}[\mathbf{g}_{kk}^H] \mathbf{a}_k}{\mathbf{a}_k^H \left( \sum\limits_{i=1}^K p_i \mathbb{E}[\mathbf{g}_{ki} \mathbf{g}_{ki}^H] - p_k \mathbb{E}[\mathbf{g}_{kk}] \mathbb{E}[\mathbf{g}_{kk}^H] + \mathbf{\Gamma}_k \right) \mathbf{a}_k},
    \end{aligned}
\end{equation}
where $\mathbf{g}_{ki} = [\mathbf{v}_{k1}^H \mathbf{h}_{i1}, \ldots, \mathbf{v}_{kM}^H \mathbf{h}_{iM}]^T$ and $\mathbf{\Gamma}_k = \sigma^2 \mathrm{diag}(\mathbb{E}[\|\mathbf{v}_{k1}\|^2], \ldots, \mathbb{E}[\|\mathbf{v}_{kM}\|^2])$. Maximizing \eqref{sinr_uatf} with respect to $\mathbf{a}_k$ yields the optimal LSFD coefficients:
\begin{equation}\label{a_lsfd_opt}
    \mathbf{a}_k^{\mathrm{opt}} = \mathbf{T}_k^{-1} \mathbb{E}[\mathbf{g}_{kk}],
\end{equation}
where $\mathbf{T}_k = \sum_{i=1}^K p_i \mathbb{E}[\mathbf{g}_{ki} \mathbf{g}_{ki}^H] - p_k \mathbb{E}[\mathbf{g}_{kk}] \mathbb{E}[\mathbf{g}_{kk}^H] + \mathbf{\Gamma}_k$.
\begin{remark}
The local LMMSE receiver combined with the CPU-level LSFD coefficients forms the jointly optimal receive combining structure under the UatF metric.
\end{remark}

\subsubsection{Advantages and Limitations}

The LMMSE-LSFD architecture offers several practical advantages over centralized processing. 
In terms of fronthaul overhead, each AP forwards only one scalar $\hat{s}_{km}$ per UE to the CPU, reducing the fronthaul load from $MN$ to $MK$ complex scalars per channel use---a factor of $N/K$ reduction when $N > K$. 
Computationally, the $N \times N$ matrix inversion is performed locally at each AP with per-AP complexity $\mathcal{O}(N^3 + KN^2)$, enabling parallel processing across APs. 
Moreover, the UatF framework yields closed-form expressions for both local combiners and LSFD coefficients, facilitating system-level analysis and optimization.

However, this architecture also suffers from several fundamental limitations. 
First, the UatF lower bound replaces random channel quantities with their statistical expectations before applying the logarithm, yielding a conservative surrogate of the ergodic rate $\mathbb{E}[\log_2(1+\mathrm{SINR}^{\mathrm{INS}})]$; the gap between the two can be non-negligible, particularly in interference-limited regimes or when channel hardening is not fully realized. 
Second, the LSFD coefficients \eqref{a_lsfd_opt} depend on long-term channel statistics and cannot adapt to instantaneous channel variations, which limits the ability to exploit favorable instantaneous channel conditions. 
Third, computing the LSFD coefficients requires estimating high-dimensional statistical expectations (e.g., $\mathbb{E}[\mathbf{g}_{ki} \mathbf{g}_{ki}^H]$), which typically involves Monte Carlo simulations with $T$ samples, incurring a global complexity of $\mathcal{O}(TK^2M(M+N) + KM^2(K+M))$.

These limitations motivate the search for alternative distributed receiver designs that directly optimize the ergodic rate while avoiding the CPU-side LSFD computation overhead.


}

\section{Ergodic-Rate-Oriented Local Receiver Design}\label{sec:opt_Elog}

In this section, we aim to maximize the ergodic rate for each UE by optimizing the local receive vectors $\{\mathbf{v}_{km}\}$. Since the ergodic rate maximization problem inherently constitutes a challenging fractional programming problem with an expectation-of-logarithm structure, we propose a systematic solution framework to solve it.
%
\subsection{Problem Formulation for Ergoic Rate Maximization}
The instantaneous signal-to-interference-plus-noise ratio (SINR) for UE $k$ after combining the local estimates from all APs can be expressed as
\begin{equation}
\mathrm{SINR}_k^{\mathrm{INS}} = \frac{p_k \left|\sum_{m=1}^M \mathbf{v}_{km}^H \mathbf{h}_{km}\right|^2}{ \sum_{i \neq k} p_i\left|\sum_{m=1}^M\mathbf{v}_{km}^H \mathbf{h}_{im}\right|^2 + \sigma^2 \sum_{m=1}^M\mathbf{v}_{km}^H \mathbf{v}_{km}},
\end{equation}
where $\sigma^2$ denotes the noise variance at each AP. The ergodic rate for UE $k$ is then given by
\begin{equation}
R_k^{\rm erg} = B\mathbb{E}\left[ \log_2\left(1 + \mathrm{SINR}_k^{\mathrm{INS}}\right) \right],
\end{equation}
where $B$ is the system bandwidth. Note that the expectation operator $\mathbb{E}[\cdot]$ lies \emph{outside} the logarithm, which captures the true average mutual information over the channel distribution. 
{The ergodic-rate maximization problem can be formulated as a functional optimization problem over the space of admissible receiving functions:
\begin{equation}   
\begin{aligned}
\max_{\{\mathbf{f}_{km} \in \mathcal{H}_m\}} & \quad \mathcal{J}_k\left(\{\mathbf{f}_{km}\}_{m=1}^M\right) \\
\text{s.t.} & \quad \mathbf{v}_{km} = \mathbf{f}_{km}(\mathbf{H}_m, \mathcal{S}), \quad \forall m \in \mathcal{M},
\end{aligned}
\end{equation}
where $\mathcal{H}_m \triangleq L^2(\mathbb{C}^{N\times K}, p(\mathbf{H}_m); \mathbb{C}^N)$ denotes the Hilbert space of square-integrable functions mapping the local channel matrix to a combining vector, and the objective functional $\mathcal{J}_k: \prod_{m=1}^M \mathcal{H}_m \rightarrow \mathbb{R}$ is defined as
\begin{equation}
    \begin{aligned}
&\mathcal{J}_k\left(\{\mathbf{f}_{km}\}_{m=1}^M\right) \\\triangleq& B\mathbb{E}\left[ \log_2\left(1 + \mathrm{SINR}_k^{\mathrm{INS}}\left(\{\mathbf{f}_{km}(\mathbf{H}_m, \mathcal{S})\}_{m=1}^M\right)\right) \right].
\end{aligned}
\end{equation}
This formulation explicitly reveals that the optimization is performed over an infinite-dimensional function space, rather than a finite-dimensional vector space, which necessitates the use of variational calculus tools for deriving the optimal receiver structure.}
As expected, this problem is non-convex and challenging to solve directly due to the coupling among receive vectors in the SINR expression and the expectation over the channel distribution. 
In the following subsections, we build a systematic solution framework to solve it.
\subsection{Scale Invariance and Large-System Approximation-Based Problem Reformulation}
In this subsection, we introduce a normalization constraint to transform the original fractional objective into an equivalent constrained optimization problem by leveraging the scale invariance of the instantaneous SINR. Recognizing that the resulting pointwise constraint remains analytically intractable, we then invoke large-dimensional random matrix theory to relax it into a statistical (expectation-based) constraint in the large-system regime.
\subsubsection{Scale Invariance and Normalization}
Observe that for any non-zero scalar $c_k$, replacing $\{\mathbf{v}_{km}\}$ with $\{c_k\mathbf{v}_{km}\}$ leaves $\mathrm{SINR}_k^{\mathrm{INS}}$ unchanged. This scale invariance implies that the problem admits infinitely many equivalent solutions differing only by a scaling factor. To eliminate this non-uniqueness, we introduce a normalization constraint by setting the interference-plus-noise term to unity:
\begin{equation} \label{itf_ue_ori}
    \sum_{i=1\backslash k}^K p_i\left|\sum_{m=1}^M\mathbf{v}_{km}^H \mathbf{h}_{im}\right|^2 + \sigma^2 \sum_{m=1}^M\mathbf{v}_{km}^H \mathbf{v}_{km} = 1.
\end{equation}
This constraint is required to hold pointwise for each channel realization. The equivalent formulation of the original problem is thus given by
\begin{subequations}\label{prob_pointwise_ori}
    \begin{align}
        \max_{\{\mathbf{v}_{km}\}_{m=1}^M} \quad &  \mathcal{F}_k(\{\mathbf{v}_{km}\}_{m=1}^M)\label{obj_prob_pointwise_ori} \\
        \text{s.t.} \quad & \eqref{itf_ue_ori},
    \end{align}
\end{subequations}
where
\begin{equation}
    \mathcal{F}_k\left(\{\mathbf{v}_{km}\}_{m=1}^M\right) = \mathbb{E}\left[ \log_2\left(1 + p_k\left| \sum_{m=1}^M \mathbf{v}_{km}^H \mathbf{h}_{km} \right|^2 \right) \right].
\end{equation}
Even under the normalization constraint, the expectation in the objective function remains outside the logarithm and cannot be simplified to a direct maximization of the numerator. Moreover, the pointwise constraint \eqref{itf_ue_ori} is analytically intractable, as the first-order optimality conditions for problem \eqref{prob_pointwise_ori} are generally difficult to solve. 

\subsubsection{Large-Dimensional Asymptotics}
To obtain a tractable formulation, we replace the pointwise normalization with an expectation-based surrogate constraint. In the large-system regime where the total number of antennas $NM$ is large, the channel hardening phenomenon ensures that the random quantities in the SINR denominator concentrate around their expectations. Specifically, for quadratic forms of the type $\sum_{m=1}^M \mathbf{v}_{km}^H\mathbf{h}_{im}$, the fluctuations relative to the mean become negligible as the dimension grows. Consequently, the pointwise constraint can be relaxed to an expectation-based constraint:
\begin{equation}
    \mathbb{E}\left[\sum_{i=1\backslash k}^K p_i\left|\sum_{m=1}^M\mathbf{v}_{km}^H \mathbf{h}_{im}\right|^2 + \sigma^2 \sum_{m=1}^M\mathbf{v}_{km}^H \mathbf{v}_{km}\right] = 1, \label{E_itf_ue_ori}
\end{equation}
which is significantly more tractable for analysis. This approximation becomes asymptotically tight in the large-dimensional limit, with a relative error of order $\mathcal{O}(1/NM)$~\cite{couillet2011CUP}. The original optimization problem can thus be approximated as
\begin{subequations}\label{prob_exp_surrogate}
    \begin{align}
        \max_{\{\mathbf{v}_{km}\}_{m=1}^M} \quad & \mathcal{F}_k(\{\mathbf{v}_{km}\}_{m=1}^M) \\
        \text{s.t.} \quad & \eqref{E_itf_ue_ori}.
    \end{align}
\end{subequations}
The expectation-based constraint \eqref{E_itf_ue_ori} makes the subsequent variational analysis tractable while preserving the essential characteristics of the original problem \eqref{prob_pointwise_ori} in the large-system limit. 


\subsection{Functional-Variational Solution for Asymptotic Problems}\label{subsec:Elog_KKT}
In this subsection, we solve the resulting relaxed problem \eqref{prob_exp_surrogate} using functional-variational analysis, which yields the first-order stationarity conditions and leads to the derivation of the asymptotically optimal Q-LMMSE receiver with a deterministic-equivalent scalar characterization of the cross-AP coupling terms.

\subsubsection{Lagrangian and Perturbation-Based Variational Analysis for Stationarity Conditions}
Since the constraint \eqref{E_itf_ue_ori} is in expectation form, we can construct the Lagrange dual functional of \eqref{prob_exp_surrogate} for UE $k$ as
\begin{equation}
    \begin{aligned}
        &\mathcal{L}_k = \mathcal{F}_k + \mu_k \mathcal{L}_k^{\mathrm{con}},
    \end{aligned}
\end{equation}
where $\mathcal{L}_k\triangleq\mathcal{L}_k(\{\mathbf{v}_{km}\}_{m=1}^M, \mu_k)$, $\mathcal{F}_k\triangleq\mathcal{F}_k(\{\mathbf{v}_{km}\}_{m=1}^M)$, and
$\mathcal{L}_k^{\mathrm{con}} \triangleq \eqref{E_itf_ue_ori}$
denotes the constraint term. Here, $\mu_k$ is the Lagrange multiplier associated with the expectation-based constraint for UE $k$. {To derive the first-order optimality conditions, we then take the functional derivative of $\mathcal{L}_k$ with respect to $\mathbf{v}_{jl}$ for all $j \in \mathcal{K}$ and $l \in \mathcal{M}$, and setting it to zero. }
First, we introduce a perturbation $\epsilon \boldsymbol{\kappa}_{kl}$, where $\epsilon$ is an infinitesimal scalar and $\boldsymbol{\kappa}_{kl} \triangleq \boldsymbol{\kappa}_{kl}(\mathbf{H}_l)$ is an arbitrary differentiable test function satisfying appropriate boundary conditions. The perturbed receive vector at AP $l$ is given by
\begin{equation}
\hat{\mathbf{v}}_{kl} = \mathbf{v}_{kl} + \epsilon \boldsymbol{\kappa}_{kl}, \quad \forall k \in \mathcal{K}, l \in \mathcal{M}.
\end{equation}
The first-order functional derivative of $\mathcal{L}_k$ with respect to $\mathbf{v}_{jl}$ is then defined as
\begin{equation}
\frac{\delta \mathcal{L}_k}{\delta \mathbf{v}_{jl}} = \lim_{\epsilon \to 0} \frac{\mathcal{L}_k(\{{\mathbf{v}}_{km}\}_{m=1}^M, \epsilon \boldsymbol{\kappa}_{jl}, \mu_k) - \mathcal{L}_k(\{{\mathbf{v}}_{km}\}_{m=1}^M, \mu_k)}{\epsilon}. 
\end{equation}
We now derive the first-order variations of the objective term $\mathcal{F}_k$ and the constraint term $\mathcal{L}_k^{\mathrm{con}}$ separately. Since the objective $\mathcal{F}_k$ depends only on $\{\mathbf{v}_{km}\}_{m=1}^M$ for a fixed UE $k$, we have $\frac{\delta \mathcal{F}_k}{\delta \mathbf{v}_{jl}}=0$ for all $j \neq k$. For $j = k$, the first-order variation of $\mathcal{F}_k$ with respect to $\mathbf{v}_{kl}$ is given by
    \begin{align}\label{1st_variation_obj}
        \frac{\delta \mathcal{F}_k}{\delta \mathbf{v}_{kl}}
        =2\Re\left\{ \mathbb{E}\left[ \boldsymbol{\kappa}_{kl}^H \mathbf{h}_{kl} \varphi_k \right]\right\},
    \end{align}
where
\begin{equation}
\varphi_k = \frac{p_k}{\ln 2} \frac{\left( \sum_{m=1}^M \mathbf{v}_{km}^H \mathbf{h}_{km} \right)^H}{1 + p_k \left| \sum_{m=1}^M \mathbf{v}_{km}^H \mathbf{h}_{km} \right|^2},
\end{equation}
Similarly, the first-order variation of the constraint term $\mathcal{L}_k^{\mathrm{con}}$ with respect to $\mathbf{v}_{kl}, j=k$ is
\begin{equation}\label{1st_variation_con}
    \begin{aligned}
        \frac{\delta \mathcal{L}_k^{\mathrm{con}}}{\delta \mathbf{v}_{kl}} = 2\Re\left\{ \mathbb{E}\left[\boldsymbol{\kappa}_{kl}^H \left( \mu_k \sum_{i=1\backslash k}^K p_i \mathbf{h}_{il} \varUpsilon_{ki}^H + \mu_k \sigma^2 \mathbf{v}_{kl} \right) \right] \right\},
    \end{aligned}
\end{equation}
where $\varUpsilon_{ki} = \sum_{m=1}^M \mathbf{v}_{km}^H \mathbf{h}_{im}$. As with the objective term, the constraint variation satisfies $\frac{\delta \mathcal{L}_k^{\mathrm{con}}}{\delta \mathbf{v}_{jl}} = 0$ for all $j \neq k$.
\begin{proof}
    See Appendix~\ref{app:1st_variation}.
\end{proof}

We note that an arbitrary perturbation function actually considerably affects our solution, but fortunately we can eliminate this effect using the following Lemma \ref{lem_vm}.

\begin{lemma}[Fundamental Lemma of Complex-Vector-Valued Variational Calculus]\label{lem_vm}
    Let $(\mathcal{X}, \Sigma, \mu)$ be a measure space with $\mu(\mathcal{X}) < \infty$. Consider the Hilbert space of square-integrable functions $\mathcal{H} := L^2(\mathcal{X}, \mu; \mathbb{C}^N)$ with inner product $\langle \mathbf{u}, \mathbf{v} \rangle := \int_{\mathcal{X}} \mathbf{u}(\mathbf{x})^H \mathbf{v}(\mathbf{x}) \, d\mu(\mathbf{x})$. If $\mathbf{f} \in \mathcal{H}$ satisfies $\Re\{ \langle \boldsymbol{\eta}, \mathbf{f} \rangle \} = 0$ for every test function $\boldsymbol{\eta} \in \mathcal{H}$, then $\mathbf{f}(\mathbf{x}) = \mathbf{0}$ for $\mu$-almost every $\mathbf{x} \in \mathcal{X}$.
\end{lemma}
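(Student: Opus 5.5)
The plan is to choose the test function equal to $\mathbf{f}$ itself. Because $\mathcal{H}$ is a vector space containing $\mathbf{f}$, the choice $\boldsymbol{\eta} := \mathbf{f}$ is admissible. The hypothesis then gives
\begin{equation*}
0 = \Re\{\langle \mathbf{f}, \mathbf{f}\rangle\} = \int_{\mathcal{X}} \|\mathbf{f}(\mathbf{x})\|^2 \, d\mu(\mathbf{x}) = \|\mathbf{f}\|_{\mathcal{H}}^2 ,
\end{equation*}
since $\mathbf{f}(\mathbf{x})^H\mathbf{f}(\mathbf{x}) = \|\mathbf{f}(\mathbf{x})\|^2$ is real and nonnegative. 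The integral is finite because $\mathbf{f}\in\mathcal{H}$.

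The next step is to pass from a vanishing integral to a pointwise statement. The integrand $\|\mathbf{f}(\cdot)\|^2$ is nonnegative and measurable, so a vanishing integral forces it to be zero $\mu$-almost everywhere. To see this, set $A_n = \{\mathbf{x} : \|\mathbf{f}(\mathbf{x})\|^2 > 1/n\}$. Markov's inequality gives $\mu(A_n) \le n \int \|\mathbf{f}\|^2 d\mu = 0$. Taking the countable union over $n$ then gives $\mu(\{\|\mathbf{f}\|\neq 0\}) = 0$, that is, $\mathbf{f}(\mathbf{x}) = \mathbf{0}$ for $\mu$-a.e.\ $\mathbf{x}$.

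The only point that needs care is whether the test-function class is truly all of $\mathcal{H}$. If instead only smooth or compactly supported test functions $\boldsymbol{\kappa}_{kl}$ are allowed, as in the perturbation argument above, I would argue by density. Here $\mu$ is the finite Gaussian measure $p(\mathbf{H}_m)\,d\mathbf{H}_m$, and for it smooth compactly supported functions are dense in $L^2$. Choose $\boldsymbol{\eta}_n \to \mathbf{f}$ in $\mathcal{H}$. By Cauchy--Schwarz, $\Re\langle \boldsymbol{\eta}_n,\mathbf{f}\rangle \to \|\mathbf{f}\|^2$, which must equal $0$, and the argument above then finishes the proof.

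It is also worth recording that real-part orthogonality already suffices. Testing with $\mathrm{j}\boldsymbol{\eta}$ shows that the imaginary part of $\langle\boldsymbol{\eta},\mathbf{f}\rangle$ vanishes as well, so in fact $\langle\boldsymbol{\eta},\mathbf{f}\rangle=0$ for all $\boldsymbol{\eta}$. The finiteness assumption $\mu(\mathcal{X})<\infty$ is needed only to guarantee that constant-direction and bounded test functions belong to $\mathcal{H}$; it does not enter the main step.
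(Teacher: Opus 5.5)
Your proof is correct and uses the same key step as the paper's proof: test with $\boldsymbol{\eta}=\mathbf{f}$ to get $\int_{\mathcal{X}}\|\mathbf{f}(\mathbf{x})\|^2\,d\mu(\mathbf{x})=0$, and conclude that $\mathbf{f}=\mathbf{0}$ $\mu$-almost everywhere. The paper first tests with $i\boldsymbol{\eta}$ to show $\langle\boldsymbol{\eta},\mathbf{f}\rangle=0$ for every $\boldsymbol{\eta}$; as you observe, $\langle\mathbf{f},\mathbf{f}\rangle$ is already real, so that step is not needed, and your Markov-inequality justification of the almost-everywhere conclusion and your density remark for restricted test classes are valid additions the paper does not make.
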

\begin{proof}
    See Appendix~\ref{app:lemma}.
\end{proof}

According to {Lemma \ref{lem_vm}}, since $\boldsymbol{\kappa}_{kl}(\mathbf{H}_l)$ is an arbitrary perturbation direction, the stationary point must satisfy $\frac{\delta \mathcal{L}_k}{\delta \mathbf{v}_{kl}} =0, \forall l$, which leads to the following set of equations:
\begin{equation}\label{first-order-deltaLk}
    \begin{aligned}
       &\mathbb{E}\Bigg[
        \mathbf{h}_{kl}\varphi_k +  \mu_k \left( \sum_{i=1\backslash k}^K p_i \mathbf{h}_{il} \varUpsilon_{ki}^H + \sigma^2 \mathbf{v}_{kl} \right)
        \,\Bigg|\, \mathbf{H}_l
    \Bigg] = \mathbf{0},\forall l, 
    \end{aligned}
\end{equation}
where $\mathbb{E}[\cdot \mid \mathbf{H}_l]$ denotes the conditional expectation given $\mathbf{H}_l$.
\begin{proof}
    See Appendix~\ref{app:station_point}.
\end{proof}

\subsubsection{Asymptotically Optimal Local Receiver Structure}
We hereafter decompose the aggregate term $\sum_{m=1}^M \mathbf{v}_{km}^H \mathbf{h}_{im}$ into a local component at AP $l$ and a cross-AP coupling component:
\begin{equation}
\sum_{m=1}^M \mathbf{v}_{km}^H \mathbf{h}_{im} = \mathbf{v}_{kl}^H \mathbf{h}_{il} + \sum_{m \neq l}^M \mathbf{v}_{km}^H \mathbf{h}_{im}.
\end{equation}
Substituting this decomposition into $\varUpsilon_{ki}$ in \eqref{first-order-deltaLk} and rearranging terms, we obtain
\begin{equation}\label{eq:decomposed_stationary}
    \begin{aligned}
    &\mu_k \left( \sum_{i=1\backslash k}^K p_i \mathbf{h}_{il}\mathbf{h}_{il}^H + \sigma^2 \mathbf{I}_N \right) \mathbf{v}_{kl}\\
        = &- \mathbb{E}\Bigg[
        \mathbf{h}_{kl}\varphi_k + \mu_k \sum_{i=1\backslash k}^K p_i \mathbf{h}_{il} \left( \sum_{m \neq l}^M \mathbf{v}_{km}^H \mathbf{h}_{im} \right)^H
        \,\Bigg|\, \mathbf{H}_l
    \Bigg],
    \end{aligned}
\end{equation}
where the left-hand side follows from the fact that $\mathbf{v}_{kl}$ is a function of $\mathbf{H}_l$ and thus can be extracted from the conditional expectation. Solving for $\mathbf{v}_{kl}$, we obtain the Q-LMMSE receive vector structure
\begin{equation}\label{q-lmmse str}
    \begin{aligned}
   & \mathbf{v}_{kl}
    = - \left( \sum_{i=1\backslash k}^K p_i \mathbf{h}_{il}\mathbf{h}_{il}^H + \sigma^2 \mathbf{I}_N \right)^{-1} \times\\
    &\left(\mathbf{h}_{kl}\eta_{kl}(\mathbf{H}_l)+\mathbb{E}\Bigg[
        \sum_{i=1\backslash k}^K p_i \mathbf{h}_{il} \left( \sum_{m \neq l}^M \mathbf{v}_{km}^H \mathbf{h}_{im} \right)^H
        \,\Bigg|\, \mathbf{H}_l
    \Bigg]\right),
    \end{aligned}
\end{equation}
where
\begin{equation}
\eta_{kl}(\mathbf{H}_l) \triangleq \mathbb{E}\left[
        \frac{p_k}{\mu_k \ln 2} \frac{\left( \sum_{m=1}^M \mathbf{v}_{km}^H \mathbf{h}_{km} \right)^H}{1 + p_k \left| \sum_{m=1}^M \mathbf{v}_{km}^H \mathbf{h}_{km} \right|^2}
\,\Bigg|\, \mathbf{H}_l \right].
\end{equation}
\eqref{q-lmmse str} reveals that the receive vector consists of three components: an MMSE-type inverse matrix, a desired-signal term scaled by $\eta_{kl}(\mathbf{H}_l)$, and a cross-AP interference coupling term. 

\subsubsection{Large-System Random Matrix Analysis for Cross-AP Coupling Terms}
The term $\left( \sum_{m \neq l}^M \mathbf{v}_{km}^H \mathbf{h}_{im} \right)^H$ represents the cross-AP coupling effect, which captures the influence of receive vectors at other APs on the optimal local receive vector at AP $l$. This coupling term is generally intractable in closed form due to its dependence on the receive vectors and channel realizations across all APs. However, in the large-system regime where $N$ and $M$ grow large with a fixed ratio, random matrix theory provides powerful tools to approximate such random functionals by deterministic equivalents that depend only on channel statistics. This enables an explicit characterization of the optimal receiver structure.

\begin{lemma}\label{lemma_appro}
   For UE $k$ and AP $l$, and for any $i \neq k$ and $m \neq l$, the cross-AP coupling terms satisfy
\begin{equation}
    \mathbb{E}\left[
        \left( \sum_{m \neq l}^M \mathbf{v}_{km}^H \mathbf{h}_{im} \right)^H
        \,\middle|\, \mathbf{H}_l
    \right] = 0,
\end{equation}
and consequently,
\begin{equation}
\mathbb{E}\left[
        \sum_{i=1\backslash k}^K p_i \mathbf{h}_{il}
        \left( \sum_{m \neq l}^M \mathbf{v}_{km}^H \mathbf{h}_{im} \right)^H
        \,\middle|\, \mathbf{H}_l
    \right] = \mathbf{0}.
\end{equation}
\end{lemma}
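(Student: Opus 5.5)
The plan is to combine the independence structure in Assumption A2 with a phase-symmetry (circularity) argument for the interfering channels. The second identity follows directly from the first. Each $\mathbf{h}_{il}$ is a column of $\mathbf{H}_l$ and is therefore $\mathbf{H}_l$-measurable, and $p_i$ is deterministic. So for every $i\neq k$ both can be taken out of the conditional expectation:
\begin{equation*}
\mathbb{E}\Big[\sum_{i=1\backslash k}^K p_i\mathbf{h}_{il}\Big(\sum_{m\neq l}^M\mathbf{v}_{km}^H\mathbf{h}_{im}\Big)^H\,\Big|\,\mathbf{H}_l\Big]=\sum_{i=1\backslash k}^K p_i\mathbf{h}_{il}\,\mathbb{E}\Big[\Big(\sum_{m\neq l}^M\mathbf{v}_{km}^H\mathbf{h}_{im}\Big)^H\,\Big|\,\mathbf{H}_l\Big].
\end{equation*}
All the work is therefore in the first identity.

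For the first identity, fix $m\neq l$ and $i\neq k$. By A2 and the independence of channels across APs, the pair $(\mathbf{v}_{km},\mathbf{h}_{im})$ is a function of $\mathbf{H}_m$ alone, and $\mathbf{H}_m$ is independent of $\mathbf{H}_l$. Hence
\begin{equation*}
\mathbb{E}[\mathbf{v}_{km}^H\mathbf{h}_{im}\mid\mathbf{H}_l]=\mathbb{E}[\mathbf{v}_{km}^H\mathbf{h}_{im}].
\end{equation*}
It then suffices to show that this unconditional mean vanishes.

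To do this, I would use the circular symmetry of $\mathbf{h}_{im}\sim\mathcal{CN}(\mathbf{0},\mathbf{R}_{im})$. Consider the measure-preserving map $\mathbf{H}_m\mapsto\tilde{\mathbf{H}}_m$ that replaces column $i$ by $e^{j\theta}\mathbf{h}_{im}$ and leaves the other columns unchanged. This map preserves $p(\mathbf{H}_m)$ because the columns are independent and each is circularly symmetric. Suppose the receiving function is invariant under this map, i.e.\ $\mathbf{f}_{km}(\tilde{\mathbf{H}}_m)=\mathbf{f}_{km}(\mathbf{H}_m)$. Then
\begin{equation*}
\mathbb{E}[\mathbf{v}_{km}^H\mathbf{h}_{im}]=e^{j\theta}\,\mathbb{E}[\mathbf{v}_{km}^H\mathbf{h}_{im}]\quad\text{for all }\theta,
\end{equation*}
which forces the mean to be zero.

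The main obstacle is justifying this invariance without circularity. The stationary form \eqref{q-lmmse str} at AP $m$ has three ingredients:
\begin{itemize}
\item the matrix $\sum_{i'\neq k}p_{i'}\mathbf{h}_{i'm}\mathbf{h}_{i'm}^H+\sigma^2\mathbf{I}_N$, which is invariant under the map because $\mathbf{h}_{im}\mathbf{h}_{im}^H$ is;
\item the term $\mathbf{h}_{km}\eta_{km}(\mathbf{H}_m)$, which is untouched, provided $\eta_{km}$ is invariant;
\item the coupling term at AP $m$ itself.
\end{itemize}
I would handle this with a consistency (fixed-point) argument. Restrict the search to the class of receiving functions that are invariant under phase rotations of the interfering columns. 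On that class, the right-hand side of \eqref{q-lmmse str} maps invariant functions to invariant functions. Moreover, by the argument just given, the coupling terms vanish identically on that class. Hence a stationary point within this symmetric class exists and satisfies the lemma.

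If this symmetry route proves too delicate, the fallback is a large-system argument. Take $\mathbf{v}_{km}\propto\mathbf{A}_m^{-1}\mathbf{h}_{km}$ with $\mathbf{A}_m$ the interference-plus-noise matrix, and apply the matrix inversion lemma to strip $\mathbf{h}_{im}$ out of $\mathbf{A}_m$. This gives
\begin{equation*}
\mathbf{v}_{km}^H\mathbf{h}_{im}\propto\frac{\mathbf{h}_{km}^H\mathbf{A}_{m,\setminus i}^{-1}\mathbf{h}_{im}}{1+p_i\mathbf{h}_{im}^H\mathbf{A}_{m,\setminus i}^{-1}\mathbf{h}_{im}}.
\end{equation*}
The numerator is bilinear in the independent zero-mean vectors $\mathbf{h}_{km}$ and $\mathbf{h}_{im}$. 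The denominator converges to a deterministic equivalent, so the mean is asymptotically zero. Summing over $m\neq l$ then completes the argument.
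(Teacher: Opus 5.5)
\textbf{There is a genuine gap.} Your reductions match the paper's. Pulling $p_i\mathbf{h}_{il}$ out of the conditional expectation gives the second identity. A2 plus independence across APs reduces the first identity to showing $c_{k,i,m}=\mathbb{E}[\mathbf{h}_{im}^H\mathbf{v}_{km}]=0$. The gap is in how you then show $c_{k,i,m}=0$.

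Once you restrict the search to receiving functions invariant under phase rotations of the interfering columns, $c_{k,i,m}=0$ holds by construction. So your argument only shows that \emph{invariant} stationary points satisfy the lemma. The lemma is needed for the stationary point characterized by \eqref{q-lmmse str}. There the term $\sum_{r\neq k}p_r\mathbf{h}_{rm}\sum_{t\neq m}c_{k,r,t}$ would break invariance if the constants were nonzero, and you never rule that case out.

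The step ``the right-hand side maps invariant functions to invariant functions'' also does not give existence of a stationary point in that class. You would need a principle-of-symmetric-criticality argument plus an existence result for the restricted problem. Note that the Lagrangian is invariant only under a \emph{common} rotation of column $i$ at all APs, because the constraint contains the cross terms $\mathbb{E}[\mathbf{v}_{km}^H\mathbf{h}_{im}]\,\mathbb{E}[\mathbf{v}_{kt}^H\mathbf{h}_{it}]^*$. You supply neither piece.

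Your fallback is circular. It presupposes the decoupled form $\mathbf{v}_{km}\propto\mathbf{A}_m^{-1}\mathbf{h}_{km}$, which is obtained only \emph{after} the lemma is applied. It also yields only asymptotic vanishing, whereas the lemma claims exact equality.

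The missing idea is the paper's self-consistency step, which keeps the coupling as unknown constants instead of assuming them away:
\begin{itemize}
\item Substitute the stationary structure at AP $m$, with the $c_{k,r,t}$ left free, and project onto $\mathbf{h}_{im}$.
\item After Sherman--Morrison, the desired-signal term and the $r\neq i$ coupling terms vanish by symmetry.
\item This leaves the homogeneous system $c_{k,i,m}=-p_i\beta_{k,i,m}\sum_{t\neq m}c_{k,i,t}$, where $\beta_{k,i,m}=\mathbb{E}[\mathbf{h}_{im}^H(\sum_{j\neq k}p_j\mathbf{h}_{jm}\mathbf{h}_{jm}^H+\sigma^2\mathbf{I}_N)^{-1}\mathbf{h}_{im}]$.
\item The rank-one bound $x/(1+p_ix)<1/p_i$ gives $0\le p_i\beta_{k,i,m}<1$.
\item With $S_{k,i}=\sum_t c_{k,i,t}$, this yields $S_{k,i}\bigl(1+\sum_m \frac{p_i\beta_{k,i,m}}{1-p_i\beta_{k,i,m}}\bigr)=0$. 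Hence $S_{k,i}=0$ and every $c_{k,i,m}=0$.
\end{itemize}
This contraction-type forcing is what excludes non-invariant solutions, and your proposal has no counterpart to it.

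Your remark that the delicate point is the invariance of $\eta_{km}(\mathbf{H}_m)$ is well taken. The paper's odd-symmetry step for the desired-signal term tacitly assumes that $\eta_{km}$ is even in $\mathbf{h}_{im}$. However, restricting to the symmetric class sidesteps that issue rather than resolving it, and the price is a weaker statement.
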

\begin{proof}
    See Appendix~\ref{app:lemma_appro}.
\end{proof}
By Lemma~\ref{lemma_appro}, the cross-AP coupling terms vanish in the conditional mean sense. Substituting this result into \eqref{q-lmmse str}, the local receive vector simplifies to
\begin{equation}\label{mmse_like_2}
    \begin{aligned}
     \mathbf{v}_{kl}
    = &- \left( \sum_{i=1\backslash k}^K p_i \mathbf{h}_{il}\mathbf{h}_{il}^H + \sigma^2 \mathbf{I}_N \right)^{-1} \mathbf{h}_{kl} \eta_{kl}(\mathbf{H}_l).
    \end{aligned} 
\end{equation}
This result indicates that although the cross-AP coupling terms are generally non-zero for individual channel realizations, their conditional expectation vanishes, thereby decoupling the local receive vector design across APs. 

The scalar term $\eta_{kl}(\mathbf{H}_l)$ involves high-dimensional quadratic forms and fractional random functionals. Under standard random matrix concentration results, its dependence on the instantaneous channel realization vanishes asymptotically. Specifically, when the total number of antennas $NM$ is sufficiently large, there exists a deterministic scalar $\bar{\eta}_k$ (depending only on the UE index $k$) such that
\begin{equation}
    \eta_{kl}(\mathbf{H}_l) - \bar{\eta}_k \xrightarrow{NM \to \infty} 0,
    \qquad \forall l \in \mathcal{M}.
\end{equation}
Consequently, \eqref{mmse_like_2} can be further approximated as
\begin{equation}
    \mathbf{v}_{kl}
    \approx
    - \bar{\eta}_k
    \left( \sum_{i=1\backslash k}^K p_i \mathbf{h}_{il}\mathbf{h}_{il}^H + \sigma^2 \mathbf{I}_N \right)^{-1}\mathbf{h}_{kl}.
\end{equation}
This reveals that, for a fixed UE $k$, the local receive vectors at all APs converge to the same MMSE-type structure, differing only by a deterministic scaling factor $\bar{\eta}_k$. Since this common scaling factor does not affect the instantaneous SINR due to the scale invariance, it can be absorbed into the normalization. Thus, in the large-system limit, the asymptotically optimal receive vector Q-LMMSE is given by
\begin{equation}\label{qlmmse}
    \mathbf{v}_{kl}^{\mathrm{Q-LMMSE}}
    = \left( \sum_{i=1\backslash k}^K p_i \mathbf{h}_{il}\mathbf{h}_{il}^H + \sigma^2 \mathbf{I}_N \right)^{-1}\mathbf{h}_{kl}.
\end{equation}

\begin{remark}{[Physical Interpretation and Relationship with LMMSE Receiver]\label{rem:qlmmse_lmmse}
The Q-LMMSE receiver \eqref{qlmmse} differs from the conventional LMMSE receiver \eqref{v_lmmse_local} only in the interference covariance matrix: Q-LMMSE excludes the desired user's own channel ($i \neq k$), whereas LMMSE includes all users ($i = 1, \ldots, K$). By the Sherman--Morrison formula, the two receivers are related as
\begin{equation}
    \mathbf{v}_{kl}^{\mathrm{Q-LMMSE}} = \frac{\mathbf{v}_{kl}^{\mathrm{LMMSE}}}{1 - p_k \mathbf{h}_{kl}^H \mathbf{W}_l^{-1} \mathbf{h}_{kl}},
\end{equation}
where $\mathbf{W}_l = \sum_{i=1}^K p_i \mathbf{h}_{il}\mathbf{h}_{il}^H + \sigma^2 \mathbf{I}_N$. This reveals that the two receive vectors share the \emph{same direction} but differ by an instantaneous channel-dependent scalar coefficient. }

{
The performance gap between the two architectures stems not from the local combining direction, but from the CPU-side fusion strategy. The Q-LMMSE receiver employs equal-gain combining (direct summation) at the CPU, where the instantaneous scalar coefficient $\frac{1}{1 - p_k \mathbf{h}_{kl}^H \mathbf{W}_l^{-1} \mathbf{h}_{kl}}$ varies across APs and implicitly provides adaptive weighting based on instantaneous channel conditions. In contrast, the LMMSE-LSFD architecture relies on statistical LSFD coefficients computed from long-term channel statistics, which cannot track instantaneous channel variations. Consequently, when evaluated under the ergodic rate $\mathbb{E}[\log_2(1+\mathrm{SINR}^{\mathrm{INS}})]$, the Q-LMMSE receiver achieves superior performance by better exploiting instantaneous spatial degrees of freedom, while maintaining the same per-AP computational complexity and eliminating the CPU-side LSFD computation overhead.}

{It is worth noting that \eqref{qlmmse} is proportional to the classical local instantaneous SINR-maximizing combiner. However, its emergence here is fundamentally different: it is derived as the asymptotically optimal solution to the \emph{ergodic-rate maximization} problem under distributed local CSI constraints, rather than being postulated from instantaneous SINR or MSE criteria. The functional-variational derivation, the large-system decoupling proof, and the elimination of CPU-side LSFD processing constitute the core theoretical contributions of this work.}
\end{remark}

\section{Numerical Results}\label{sec:simu}
In this section, we present numerical results to validate the theoretical analysis and evaluate the performance of the proposed Q-LMMSE receiver. We compare it against the conventional LMMSE-LSFD benchmark under various system configurations. 

\subsection{System Configuration}
We consider a $1 \times 1$~km$^2$ CF-mMIMO system. The APs and UEs are uniformly and independently distributed within this area, unless otherwise specified. The noise power is $\sigma^2 = -94$~dBm (corresponding to a noise power spectral density of $-174$~dBm/Hz and a bandwidth of $100$~MHz).
The key simulation parameters are summarized in Table~\ref{tab:sim_parameters}. All results are averaged over $10^3$ independent channel realizations unless otherwise noted. The LSFD statistics for LMMSE-LSFD are estimated via CPU-side Monte Carlo simulation with $T = 1000$ synthetic realizations.
\begin{table}[t]
    \centering
    \setlength{\tabcolsep}{14pt}
    \caption{Simulation Parameters}
    \label{tab:sim_parameters}
    \begin{tabular}{lc}
        \hline
        Parameter & Value \\
        \hline
        Coverage area, $r$ & $1$ km \\
        Number of users, $K$& $8\sim64$\\
    Number of APs, $M$ & $10\sim{40}$\\
    Number of antennas per AP, $N$ & $8,16,24,32$\\
    Pilot power, $\eta_k$ & $0.4$ W\\
    Uplink transmit power per UE, $p_k$ & $0.01\sim6.25$ W\\
        Shadowing standard deviation, $\sigma_{\mathrm{sh}}$ & $8$ dB \\
        Noise power, $\sigma^2$ & $ -94$ dBm \\
        System bandwidth, $B$ & ${100}$ MHz \\
        Number of channel realizations & $1000$ \\
        \hline
    \end{tabular}
\end{table}

\subsubsection{Channel Model}
The channel vector $\mathbf{h}_{km} \in \mathbb{C}^{N}$ from user $k$ to AP $m$ is modeled as
\begin{equation}
    \mathbf{h}_{km} = \sqrt{\beta_{km}} \, \widetilde{\mathbf{h}}_{km},
\end{equation}
where $\beta_{km}$ represents the large-scale fading coefficient (including pathloss and shadowing), and $\widetilde{\mathbf{h}}_{km} \sim \mathcal{CN}(\mathbf{0}, \mathbf{I}_{N})$ denotes the small-scale fading component. The large-scale fading coefficient is given by
\begin{equation}
    \beta_{km} = \mathrm{PL}_{km} \cdot 10^{\sigma_{\mathrm{sh}} z_{km}/10},
\end{equation}
where $\mathrm{PL}_{km}$ is the pathloss, $\sigma_{\mathrm{sh}}$ is the shadowing standard deviation, and $z_{km} \sim \mathcal{N}(0,1)$ is the shadowing fading term. The pathloss model follows \cite{ETSI2020TR12} 
\begin{equation}
    \mathrm{PL}_{km} = -35.4+34\log_{10}(d_0)+20\log_{10}(f_c),
\end{equation}
where $d_0=50$ m is the reference distance and $f_c=3$ GHz is the carrier frequency. 

\subsubsection{Performance Metric of Average Ergodic Rate}
    The true achievable rate, averaged over all users, computed via Monte Carlo averaging over $10^3$ independent channel realizations:
    \begin{equation}
        \overline{R}_{\mathrm{erg}} = \frac{1}{K} \sum_{k=1}^{K} B\mathbb{E}\!\left[\log_2\!\left(1+\mathrm{SINR}_k^{\mathrm{INS}}\right)\right],
    \end{equation}
where the $\mathrm{SINR}_k^{\mathrm{INS}}$  computed with the Q-LMMSE receive vector is denoted as \textbf{Erg-Q-LMMSE}, and this with the LMMSE-LSFD receive combiner as \textbf{Erg-LMMSE-LSFD}. Also, the global MMSE receiver that optimizes the ergodic rate is denoted as \textbf{Erg-C-MMSE}.
This metric serves as the ground-truth performance indicator, reflecting the true mutual information achievable by the system under the given receiver design.    

\subsection{Performance Evaluation}
\label{subsec:performance_evaluation}

The proposed Q-LMMSE receiver is evaluated against the LMMSE-LSFD benchmark under both ergodic and UatF metrics, across various independent parameter settings. Unless otherwise specified, the default configuration is $M=20$ APs, $N=16$ antennas per AP, $K=20$ users, and uplink transmit power $p_k=1$ W.

\subsubsection{Impact of the Number of APs}
\begin{figure}[!t]
    \centering
    \includegraphics[width=\linewidth]{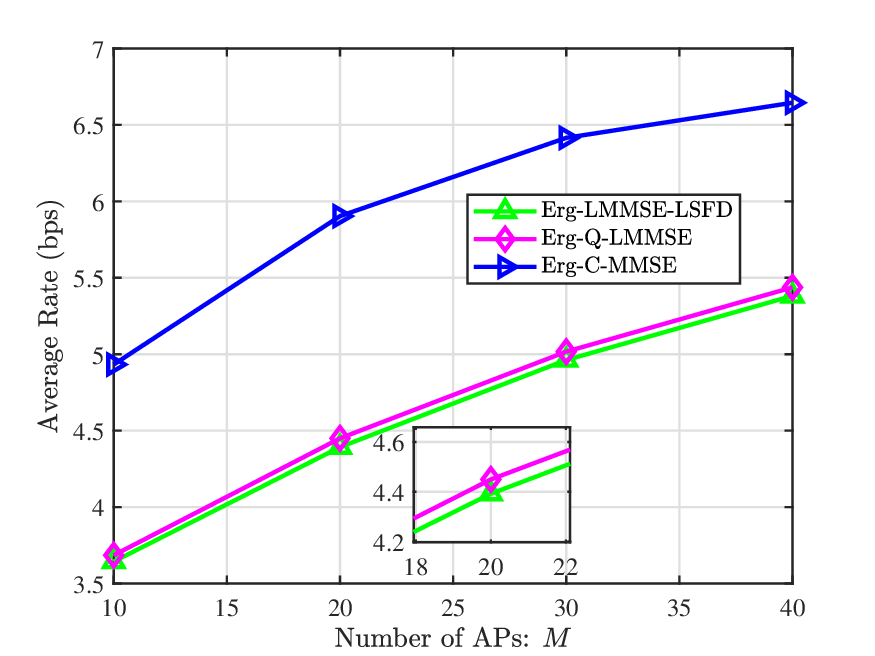}
    \caption{Average ergodic and UatF rates versus the number of APs $M$ ($N=16$, $K=16$, $p_k=1$ W).}
    \label{fig:AP_sweep}
\end{figure}

Fig.~\ref{fig:AP_sweep} illustrates the average rates versus the number of APs $M \in \{10, 20, 30, 40\}$. Several key observations can be made. First, all four curves exhibit a monotonically increasing trend with $M$, which is expected since deploying more APs provides additional macro-diversity and coherent combining gain. Second, under the ergodic rate metric, the proposed \textbf{Erg-Q-LMMSE} approach consistently outperforms the \textbf{Erg-LMMSE-LSFD} benchmark, with the absolute gap widening as $M$ increases. This indicates that the Q-LMMSE receiver more effectively exploits the additional spatial degrees of freedom provided by a denser AP deployment. Third, the \textbf{Erg-C-MMSE} curve, which represents the optimal ergodic rate achievable by a global MMSE receiver, serves as an upper bound. 


\subsubsection{Impact of the Number of Antennas per AP}
\begin{figure}[!t]
    \centering
    \includegraphics[width=\linewidth]{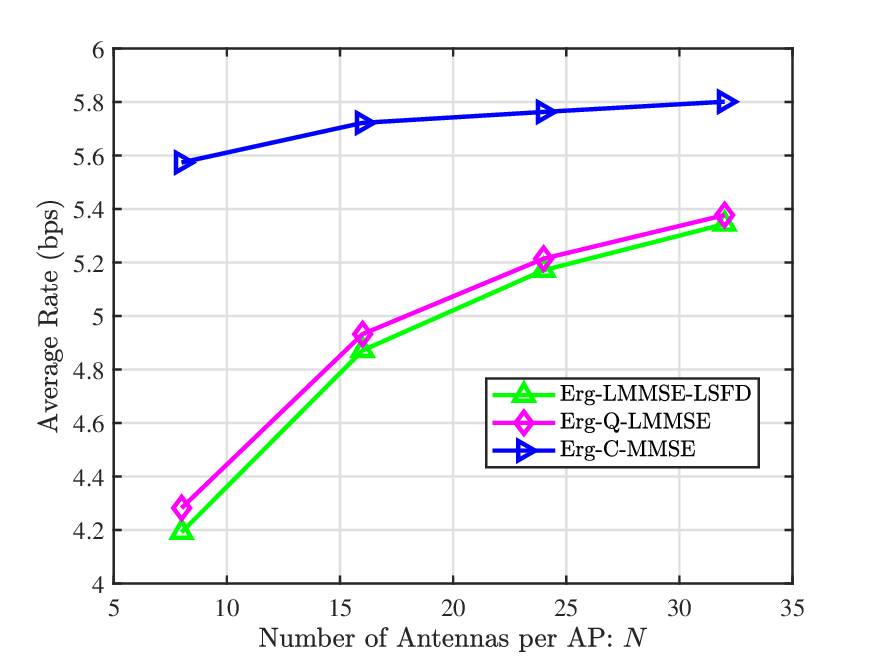}
    \caption{Average ergodic and UatF rates versus the number of antennas per AP $N$ ($M=20$, $K=16$, $p_k=1$ W).}
    \label{fig:An_sweep}
\end{figure}

Fig.~\ref{fig:An_sweep} presents the average rate performance as the number of antennas per AP $N$ varies from $8$ to $32$. The ergodic rates increase significantly with $N$ for all schemes, reflecting the array gain from additional antennas. Notably, the average rate gap between \textbf{Erg-Q-LMMSE} and \textbf{Erg-LMMSE-LSFD} is most pronounced at smaller $N$ (e.g., $N=8$) and gradually narrows as $N$ grows. This convergence behavior is consistent with the channel hardening phenomenon: as $N \to \infty$, the random channel vectors become asymptotically orthogonal, and the instantaneous SINR concentrates around its deterministic equivalent, causing the UatF bound to become increasingly tight. In the practical regime of $N \leq 32$, however, the proposed receiver still delivers a meaningful ergodic-rate gain. The \textbf{Erg-C-MMSE} curve again serves as an upper bound, and the gap between \textbf{Erg-Q-LMMSE} and \textbf{Erg-C-MMSE} reduces with $N$, indicating that the proposed receiver tends to approach optimal performance as the number of antennas increases.

\subsubsection{Impact of the Number of Users}
\begin{figure}[!t]
    \centering
    \includegraphics[width=\linewidth]{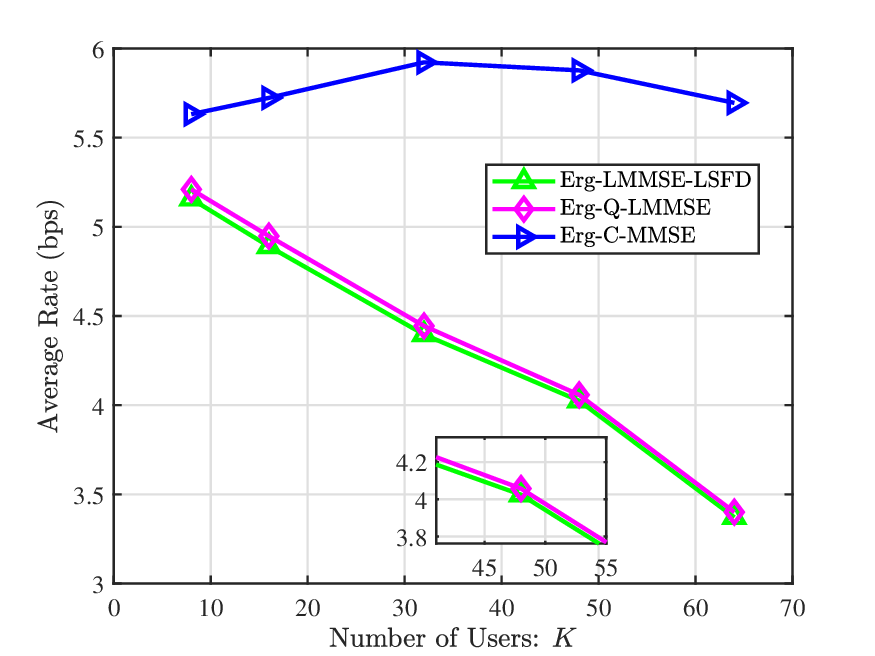}
    \caption{Average ergodic and UatF rates versus the number of users $K$ ($M=20$, $N=16$, $p_k=1$ W).}
    \label{fig:UE_sweep}
\end{figure}

Fig.~\ref{fig:UE_sweep} shows the average rates as the number of users $K$ increases from $8$ to $64$. As observed, the average rate decreases with $K$ for both the \textbf{Erg-Q-LMMSE} and \textbf{Erg-LMMSE-LSFD} schemes.
{However, the \textbf{Erg-C-MMSE} curve exhibits a non-monotonic behavior, initially increasing with $K$ before declining. 
This phenomenon can be attributed to the global CSI availability at the CPU, which enables joint interference suppression across all APs. 
In the low-user regime, increasing $K$ allows the centralized receiver to better exploit the spatial degrees of freedom and multiuser diversity, leading to an initial performance gain. 
As $K$ continues to grow, the system becomes interference-limited, and the average rate eventually decreases. 
In contrast, the distributed schemes lack global CSI coordination, and thus the accumulated interference at each AP increases monotonically with $K$, resulting in a consistent rate degradation.} Despite this overall decline, the relative ergodic-rate advantage of \textbf{Erg-Q-LMMSE} over \textbf{Erg-LMMSE-LSFD} remains positive across the entire range of $K$. Interestingly, the absolute gap appears to be largest at moderate user loads (e.g., $K=16 \sim 32$) and reduces slightly at very high loads ($K=64$), where the system becomes heavily interference-limited and both receivers approach the same performance floor. 

\subsubsection{Impact of the Uplink Transmit Power}
\begin{figure}[!t]
    \centering
    \includegraphics[width=\linewidth]{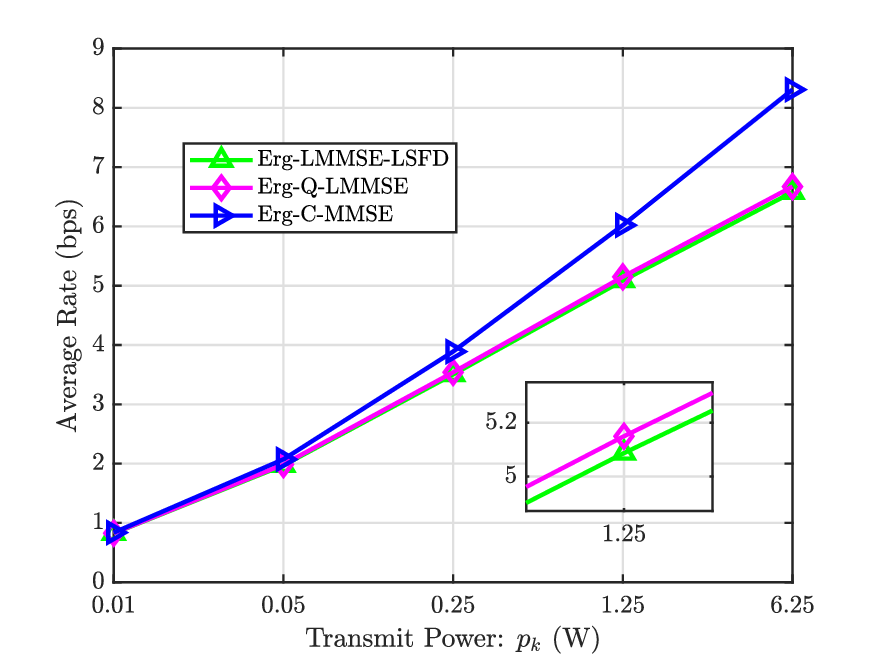}
    \caption{Average ergodic and UatF rates versus the uplink transmit power $p_k$ ($M=20$, $N=16$, $K=16$).}
    \label{fig:Rho_sweep}
\end{figure}

Fig.~\ref{fig:Rho_sweep} depicts the average rate performance as the uplink transmit power $p_k$ varies over a wide range with a $5$-fold increment step. At low power levels, the system is noise-limited, and all rates increase approximately linearly with $p_k$ (on a logarithmic scale). As $p_k$ increases, the system transitions into an interference-limited regime, and the rates begin to saturate. Notably, the curves for \textbf{Erg-Q-LMMSE} and \textbf{Erg-LMMSE-LSFD} remain very close to each other across the entire power range, indicating that the proposed distributed receiver achieves performance comparable to the UatF-optimal benchmark. However, the gap between the distributed schemes and the centralized \textbf{Erg-C-MMSE} scheme widens as the transmit power increases. This is because the centralized receiver can exploit global CSI to perform joint interference suppression, which becomes increasingly critical in the high-SNR interference-limited regime. In contrast, the distributed receivers are constrained by local CSI, limiting their ability to fully eliminate inter-user interference at high power levels.

\section{{Conclusion}}
\label{sec:conclusion}


This paper investigated the design of uplink local receivers in CF-mMIMO systems with localized processing, addressing the fundamental gap between optimizing the true ergodic achievable rate and its UatF lower bound. Under the assumption of perfect local CSI, we developed a functional-variational framework for direct ergodic-rate-oriented receiver design and derived an asymptotically optimal Q-LMMSE receiver in the large-system regime, where cross-AP coupling terms vanish in the conditional mean. 
Via the Sherman--Morrison formula, we showed that the proposed Q-LMMSE receiver maintains the same per-AP complexity as standard LMMSE receiver while completely eliminating the CPU-side LSFD coefficients calculation. Numerical results across various system configurations validated that the Q-LMMSE receiver consistently outperforms the LMMSE-LSFD benchmark in terms of the ergodic rate, demonstrating a compelling trade-off between superior ergodic performance and strictly lower system-level complexity. Future work will extend this analysis to imperfect CSI scenarios, investigate the impact of fronthaul capacity constraints, and explore robust receiver designs under practical hardware impairments.

\begin{appendices}

\section{The optimality of the UatF-Based Receiver}\label{app:uatf_proof}

\subsection{Problem Formulation for UatF-Based Rate Maximization}

Since the UatF-based rate $R_k^{\text{UatF}}=B{\rm log}_2(1+\mathrm{SINR}_k^{\text{UatF}})$ is a monotonically increasing function of $\mathrm{SINR}_k^{\text{UatF}}$, maximizing the rate is equivalent to maximizing the UatF SINR directly. Furthermore, by exploiting the scale invariance of the SINR and normalizing the interference-plus-noise power to unity, the optimization problem for UE $k$ can be equivalently formulated as
\begin{subequations}\label{sinr_max_vkm}
    \begin{align}
    \max_{\{{\mathbf{v}_{km}}\}_{m=1}^M} \quad & {p_k}\bigg| \sum_{m=1}^M \mathbb{E}[\mathbf{v}_{km}^H \mathbf{h}_{km}] \bigg|^2 \label{obj_sinr} \\
    \text{s.t.} \quad & {\rm INT}_k = 1, \label{const_den}
    \end{align}
\end{subequations} 
where ${\rm INT}_k \triangleq \sum_{i=1}^K {p_i}\mathbb{E}\left[ \big| \sum_{m=1}^M\mathbf{v}_{km}^H \mathbf{h}_{im} \big|^2 \right] - {p_k}\left| \sum_{m=1}^M \mathbb{E}\big[\mathbf{v}_{km}^H {\mathbf{h}}_{km}\big] \right|^2 + \sigma^2 \sum_{m=1}^M \mathbb{E}\left[\mathbf{v}_{km}^H \mathbf{v}_{km} \right]$ denotes the interference-plus-noise power, while the objective function \eqref{obj_sinr} represents the desired signal power for UE $k$. Note that the LSFD weights $a_{km}$ are omitted here since they appear as a common factor in both the numerator and denominator of \eqref{sinr_uatf}, and thus do not affect the optimization \eqref{sinr_max_vkm}. 

\subsection{Functional-variational Analysis for Optimal Local MMSE Receiver}

The optimal local receive vector under the UatF framework is the local MMSE receiver, which can be obtained by optimizing \eqref{sinr_max_vkm}. 
We observe that the objective function in \eqref{sinr_max_vkm} is a quadratic functional of the receive vectors, rather than a logarithmic functional.
This leads to a more straightforward derivation of the optimal receiver structure, which coincides with the classical local MMSE receiver.

\subsubsection{Lagrangian Functional and Variational Solution}
Following the approach in Section~\ref{subsec:Elog_KKT}, we construct the Lagrangian functional for problem \eqref{sinr_max_vkm} as
\begin{equation}\label{Lagr}
    \mathcal L_k\left(\{\mathbf{v}_{km}\}_{m=1}^M,\,{\mu_{k}}\right) = 
    {p_k}\bigg| \sum_{m=1}^M \mathbb{E}[\mathbf{v}_{km}^H {\mathbf{h}}_{km}] \bigg|^2 
    + {\mu_{k}} ({\rm INT}_k - 1), 
\end{equation}
where ${\mu_{k}}$ is the Lagrange multiplier for UE $k$. The corresponding functional derivative with respect to $\mathbf{v}_{kl}$ is given by
\begin{equation}\label{var_der}
    \begin{aligned}
    &\frac{\delta \mathcal L_k}{\delta \mathbf{v}_{kl}} = 
    2\Re\left\{ p_k\mathbb{E}[\mathbf{q}_{kl}^H {\mathbf{h}}_{kl}] \left(\sum_{m=1}^M \mathbb{E}[\mathbf{v}_{km}^H {\mathbf{h}}_{km}]\right)^{H} \right\}\\
    +&2\mu_k\Re\left\{p_i\sum_{i=1}^K \mathbb{E}\left[ \mathbf{q}_{kl}^{H}{\mathbf{h}}_{il} \left(\sum_{m=1}^M {\mathbf{v}_{km}^H\mathbf{h}}_{im}\right)^H \right]\right\}
    \\
    -&2\mu_k\Re\left\{ p_k\mathbb{E}[\mathbf{q}_{kl}^H {\mathbf{h}}_{kl}] \left(\sum_{m = 1}^M \mathbb{E}[\mathbf{v}_{km}^H {\mathbf{h}}_{km}]\right)^{H} \right\}\\
    &+2\mu_k\sigma^2\Re\left\{ \mathbb{E}\left[ \mathbf{q}_{kl}^{H}\mathbf{v}_{kl}  \right]\right\}.
    \end{aligned} 
\end{equation}
where $\mathbf{q}_{kl}$ is an arbitrary differentiable test function satisfying appropriate boundary conditions. Setting the functional derivative to zero and applying Lemma~\ref{lem_vm}, we obtain
\begin{equation}\label{variMeth}
   \begin{aligned}
        & p_k{\mathbf{h}}_{kl} \sum_{m = 1}^M \left(\mathbb{E}[\mathbf{v}_{km}^H {\mathbf{h}}_{km}]\right)^{H}
        + \mu_{k} 
            \sum_{i=1}^K p_i{\mathbf{h}}_{il} {\mathbf{h}}_{il}^H\mathbf{v}_{kl}\\
            &+ \mu_{k} \sum_{i=1}^K p_i \mathbf{h}_{il} \sum_{m \neq l}^M \left(\mathbb{E}[{\mathbf{h}}_{im}^H\mathbf{v}_{km}]\right)^{H}\\
            &- \mu_{k}p_k{\mathbf{h}}_{kl} \sum_{m = 1}^M \left(\mathbb{E}[\mathbf{v}_{km}^H {\mathbf{h}}_{km}]\right)^{H} 
            +\mu_{k}\sigma^2 \mathbf{v}_{kl} =0.
   \end{aligned} 
\end{equation} 
To this end, the uplink SINR$_{k}^{\rm UatF}$ in \eqref{sinr_uatf} with fixed LSFD weights is maximized by an optimal uplink local receiver via functional optimization:
\begin{equation}\label{v_full}
    \begin{aligned}
    \mathbf{v}_{kl} =& \left(\sum_{i=1}^K p_i{\mathbf{h}}_{il} {\mathbf{h}}_{il}^H + \sigma^2 \mathbf{I}\right)^{-1}{\mathbf{h}}_{kl} \\
    &\times p_k\left(\mathbb{E}[{\mathbf{h}}_{kl}^H\mathbf{v}_{kl} ]-\frac{1}{\mu_k}\sum_{m = 1}^M \mathbb{E}[{\mathbf{h}}_{km}^H\mathbf{v}_{km} ]\right) .
    \end{aligned}
\end{equation}
Since this component $p_k(\mathbb{E}[{\mathbf{h}}_{kl}^H\mathbf{v}_{kl} ]-\frac{1}{\mu_k}\sum_{m = 1}^M \mathbb{E}[{\mathbf{h}}_{km}^H\mathbf{v}_{km} ])$ is a scalar and independent of the instantaneous CSI, it can be precomputed at the CPU. In fact, this coefficient corresponds exactly to the LSFD coefficient. However, as its direct calculation here is impractical, we leverage existing LSFD methods for its computation. Therefore, the final local receiver is
\begin{equation}\label{receComb}
    \mathbf v_{kl}={\bigg(\sigma^2\mathbf{I}_N+\sum_{i=1}^Kp_i{\mathbf h}_{il}{\mathbf h}^H_{il}\bigg)^{-1}{\mathbf h}_{kl}}, \forall k,l.
\end{equation}
Based on the optimal local receive vector, the optimal LSFD coefficients is denoted as \eqref{a_lsfd_opt} \cite{BJORNSON2021Book}.

\section{Proof of \eqref{1st_variation_obj} and \eqref{1st_variation_con}}\label{app:1st_variation}
{\bf{Proof:}} We will derive the first-order variation of the objective term $\mathcal{F}_k$ and the constraint term $\mathcal{L}_k^{\mathrm{con}}$ w.r.t. $\mathbf{v}_{jl}, \forall j\in\{1,\dots,K\},l\in\{1,\dots,M\}$ separately.
\subsection{First-order variation of the objective term $\mathcal{F}_k$}
For the case $j=k$, the first-order variation of $\mathcal{F}_k$ w.r.t. $\mathbf{v}_{kl}$ is given by
\begin{equation}\label{delta_F_app}
    \begin{aligned}
        \frac{\delta \mathcal{F}_k}{\delta \mathbf{v}_{kl}}
        =& \lim_{\epsilon \to 0} \frac{\mathcal{F}_k(\{{\mathbf{v}}_{km}\}_{m=1}^M, \epsilon \boldsymbol{\kappa}_{kl}) - \mathcal{F}_k(\{{\mathbf{v}}_{km}\}_{m=1}^M)}{\epsilon} \\
        =& \mathbb{E}\left[\lim_{\epsilon \to 0} \frac{1}{\epsilon}\left(f\left(\Gamma_k+\epsilon\mathbf{h}_{kl}^H \boldsymbol{\kappa}_{kl}\right) - f\left(\Gamma_k\right) \right)\right],
    \end{aligned}
\end{equation}
where
     $f\left(\varGamma_k\right) = \log_2\left(1 + p_k\left| \varGamma_k \right|^2 \right),$
with $\varGamma_k = \sum_{m=1}^M \mathbf{v}_{km}^H \mathbf{h}_{km}$. By applying the first-order Taylor expansion, we can further derive
\begin{align}\label{f_gamma_epsilon}
    \begin{aligned}
    &f\left(\varGamma_k+\epsilon\mathbf{h}_{kl}^H \boldsymbol{\kappa}_{kl}\right)\\
    &\approx f\left(\varGamma_k\right) + \frac{\partial f}{\partial \varGamma_k}\epsilon \mathbf{h}_{kl}^H \boldsymbol{\kappa}_{kl}+ \frac{\partial f}{\partial \varGamma_k^*}\epsilon \boldsymbol{\kappa}_{kl}^H\mathbf{h}_{kl} + o(\epsilon),
    \end{aligned}  
\end{align}
where $\frac{\partial f}{\partial \varGamma_k} = \frac{p_k}{\ln 2}\frac{\varGamma_k^*}{1 + p_k|\varGamma_k|^2}$ and $\frac{\partial f}{\partial \varGamma_k^*} = \left(\frac{\partial f}{\partial \varGamma_k}\right)^*$. By substituting \eqref{f_gamma_epsilon} into \eqref{delta_F_app}, we have
\begin{subequations}
  \begin{align}
\frac{\delta \mathcal{F}_k}{\delta \mathbf{v}_{kl}} \approx& \mathbb{E}\left[\lim_{\epsilon \to 0} \frac{1}{\epsilon}\left( \frac{\partial f}{\partial \varGamma_k}\epsilon \mathbf{h}_{kl}^H \boldsymbol{\kappa}_{kl}+ \frac{\partial f}{\partial \varGamma_k^*}\epsilon \boldsymbol{\kappa}_{kl}^H\mathbf{h}_{kl} + o(\epsilon) \right)\right] \\
=&\frac{2p_k}{\ln 2}\Re\left\{ \mathbb{E}\left[ \frac{\boldsymbol{\kappa}_{kl}^H \mathbf{h}_{kl}\varGamma_k}{(1 + p_k|\varGamma_k|^2)} \right]\right\},\label{re_res_obj}
\end{align}  
\end{subequations}
where \eqref{re_res_obj} is obtained due to the fact that $\varGamma_k^*\mathbf{h}_{kl}^H \boldsymbol{\kappa}_{kl} = (\boldsymbol{\kappa}_{kl}^H \mathbf{h}_{kl}\varGamma_k)^*$, and hence the two terms in the numerator are complex conjugates of each other.
Then, by substituting $\varGamma_k = \sum_{m=1}^M \mathbf{v}_{km}^H \mathbf{h}_{km}$, we obtain the desired result in \eqref{1st_variation_obj}.

\subsection{First-order variation of the constraint term $\mathcal{L}_k^{\mathrm{con}}$}
For the case $j=k$, the first-order variation of $\mathcal{L}_k^{\mathrm{con}}$ w.r.t. $\mathbf{v}_{kl}$ is given by
\begin{equation}\label{delta_con_app}
    \begin{aligned}
        &\frac{\delta \mathcal{L}_k^{\mathrm{con}}}{\delta \mathbf{v}_{kl}}
        =\mu_k\mathbb{E}\left[\lim_{\epsilon \to 0} \frac{1}{\epsilon}\left(g\left(\varUpsilon_{ki}+\epsilon \boldsymbol{\kappa}_{kl}^H\mathbf{h}_{il}\right) - g\left(\varUpsilon_{ki}\right) \right)\right] \\
        +&\mu_k\sigma^2 \mathbb{E}\left[\lim_{\epsilon \to 0} \frac{1}{\epsilon}\left(\varPhi _k+ 2\epsilon \Re\left\{\boldsymbol{\kappa}_{kl}^H \mathbf{v}_{kl}\right\} + o(\epsilon) - \varPhi _k\right)\right],
    \end{aligned}
\end{equation}
where $g\left(\varUpsilon_{ki}\right) = |\varUpsilon  _k|^2 $, and $\varPhi _k=\sum_{m=1}^M \mathbf{v}_{km}^H \mathbf{v}_{km}$. Similar to \eqref{f_gamma_epsilon}, by applying the first-order Taylor expansion to $g(\varUpsilon  _k+\epsilon\boldsymbol{\kappa}_{kl}^H\mathbf{h}_{il})$ and substituting the result into \eqref{delta_con_app}, we obtain the desired result in \eqref{1st_variation_con}.
To this end, we have the first-order variation of the Lagrangian function $\mathcal{L}_k$ w.r.t. $\mathbf{v}_{kl}$ as \eqref{delta_Lk_vkl}.
\begin{figure*}
    \begin{equation}\label{delta_Lk_vkl}
        \frac{\delta \mathcal{L}_k}{\delta \mathbf{v}_{kl}}
        = 2\Re\left\{
            \mathbb{E}\left[
                \boldsymbol{\kappa}_{kl}^H
                \left(
                    \frac{p_k}{\ln 2}\frac{\mathbf{h}_{kl}\left( \sum_{m=1}^M \mathbf{v}_{km}^H \mathbf{h}_{km} \right)^H}{1 + p_k\left| \sum_{m=1}^M \mathbf{v}_{km}^H \mathbf{h}_{km} \right|^2} 
                     + \mu_k \left( \sum_{i=1\backslash k}^K p_i \mathbf{h}_{il} \left( \sum_{m=1}^M \mathbf{v}_{km}^H \mathbf{h}_{im} \right)^H + \sigma^2 \mathbf{v}_{kl} \right)
                \right)
            \right]
        \right\}.
\end{equation}
\noindent\rule[0.25\baselineskip]{\textwidth}{0.8pt}
\end{figure*}
This completes the proof.\hfill$\blacksquare$

\section{Proof of Lemma 1}\label{app:lemma}
{\bf{Proof:}} We further generalize the {du Bois-Raymond} lemma \cite{duBois-Raymond}, i.e., the basic lemma of variational calculus, as follows:

\subsection{Eliminating the Real Part Symbol}
Fix an arbitrary test function $\boldsymbol{\eta}\in\mathcal{H}$ that satisfies
\begin{equation}
    \Re\!\left( \int_{\mathcal{X}} \boldsymbol{\eta}(\mathbf{x})^{H}\,\mathbf f(\mathbf{x})\,d\mu(\mathbf{x}) \right) = 0,
\end{equation}
and we set
\begin{equation}
    z=\int_{\mathcal{X}} \boldsymbol{\eta}(\mathbf{x})^{H}\,\mathbf f(\mathbf{x})\,d\mu(\mathbf{x})\in\mathbb{C}.
\end{equation}
Then $\Re(z)=0$, so we can write $z=i b$ with $b\in\mathbb{R}$.  
Now construct a new test function $\boldsymbol{\eta}'$ defined by
\begin{equation}
    \boldsymbol{\eta}'(\mathbf{x}) = i\,\boldsymbol{\eta}(\mathbf{x}).
\end{equation}

Compute the new inner product:
\begin{equation}
    \begin{aligned}
        &z' = \int_{\mathcal{X}} \boldsymbol{\eta}'(\mathbf{x})^{H}\,\mathbf f(\mathbf{x})\,d\mu(\mathbf{x})
        = \int_{\mathcal{X}} (-i\,\boldsymbol{\eta}(\mathbf{x})^{H})\,\mathbf f(\mathbf{x})\,d\mu(\mathbf{x})\\
        &= -i \int_{\mathcal{X}} \boldsymbol{\eta}(\mathbf{x})^{H}\,\mathbf f(\mathbf{x})\,d\mu(\mathbf{x})
        = -i z
        = -i (i b)
        = b.
    \end{aligned}
\end{equation}
Since $b\in\mathbb{R}$, we have $\Re(z')=b=0$. Therefore $b=0$, and hence $z=0$.

Because $\boldsymbol{\eta}$ was arbitrary, we obtain:
\begin{equation}
    \int_{\mathcal{X}} \boldsymbol{\eta}(\mathbf{x})^{H}\,\mathbf f(\mathbf{x})\,d\mu(\mathbf{x}) = 0,
    \quad \forall\,\boldsymbol{\eta}\in\mathcal{H}.
\end{equation}

\subsection{Basic Lemma of Variational Calculus (du Bois-Raymond lemma \cite{duBois-Raymond})}
Choose $\boldsymbol{\eta}(\mathbf{x})=\mathbf f(\mathbf{x})$ to get
\begin{equation}
    0 = \int_{\mathcal{X}} \mathbf f(\mathbf{x})^{H}\,\mathbf f(\mathbf{x})\,d\mu(\mathbf{x})
      = \int_{\mathcal{X}} \|\mathbf f(\mathbf{x})\|^{2}\,d\mu(\mathbf{x}).
\end{equation}
The integrand $\|\mathbf f(\mathbf{x})\|^{2}\ge 0$ for all $\mathbf{x}$, and its integral is zero. By the basic properties of measure theory we conclude
\begin{equation}
    \|\mathbf f(\mathbf{x})\|^{2}=0 \quad \mu\text{-almost everywhere},
\end{equation}
i.e. $\mathbf f(\mathbf{x})=\mathbf{0}\quad \mu\text{-almost everywhere}.$ This completes the proof.
\hfill$\blacksquare$  

\section{Proof of \eqref{first-order-deltaLk}}\label{app:station_point}
{\bf{Proof:}}
Let us first define the function inside the expectation in \eqref{delta_Lk_vkl} except for the term involving $\boldsymbol{\kappa}_{kl}$ as $\mathbf{F}_{kl}$. Next, to solve for $\frac{\delta \mathcal{L}_k}{\delta \mathbf{v}_{kl}}$, we have
\begin{equation}
    \begin{aligned}
        &\frac{\delta \mathcal{L}_k}{\delta \mathbf{v}_{kl}} = 2\Re\left\{ \mathbb{E}\left[ \boldsymbol{\kappa}_{kl}^H \mathbf{F}_{kl} \right]\right\}\\
        &=\int_{\mathbf{H}_1}..\int_{\mathbf{H}_M} 2\Re\left\{ \boldsymbol{\kappa}_{kl}^H \mathbf{F}_{kl} \right\} \left(\prod_{m=1}^{M}p(\mathbf{H}_m) d\mathbf{H}_m\right)\\
        &=\int_{\mathbf{H}_l} 2\Re\left\{ \boldsymbol{\kappa}_{kl}^H \mathbb{E}\left[ \mathbf{F}_{kl} | \mathbf{H}_l \right] \right\} p(\mathbf{H}_l) d\mathbf{H}_l.\\
    \end{aligned}
\end{equation}
Then, according to \textbf{Lemma \ref{lem_vm}}, we have $\mathbb{E}\left[ \mathbf{F}_{kl} | \mathbf{H}_l \right]=0$.
This completes the proof.\hfill$\blacksquare$

\section{Proof of Lemma 2}\label{app:lemma_appro}
{\bf{Proof:}}
For user $k$ and AP $l$, given any $i \neq k$ and $m \neq l$, let
\begin{equation}
    c_{k,i,m} \triangleq \mathbb{E}\left[ \left( \mathbf{v}_{km}^H \mathbf{h}_{im} \right)^H \right]
    = \mathbb{E}\left[ \mathbf{h}_{im}^H \mathbf{v}_{km} \right].
\end{equation}
Since each AP channel is independent and $\mathbf{h}_{im} \sim \mathcal{CN}(0, \mathbf{R}_{im})$ is a circular symmetric complex Gaussian distribution, we only need to prove that $c_{k,i,m} = 0$. Substitute $c_{k,i,m}$ into the expression of $\mathbf{v}_{kl}$ in \eqref{q-lmmse str}, we have:
\begin{equation}\label{rc_c}
    \begin{aligned}
       \mathbf{v}_{kl}
    =& - \left( \sum_{j \neq k}^K p_j \mathbf{h}_{jl}\mathbf{h}_{jl}^H + \sigma^2 \mathbf{I} \right)^{-1}\enspace\times\\
     &
    \Bigg[
        \mathbf{h}_{kl} \cdot \eta_{kl}(\mathbf{H}_l)
        + \sum_{r \neq k}^K p_r \mathbf{h}_{rl} \cdot \sum_{m \neq l}^L c_{k,r,m}
    \Bigg], 
    \end{aligned}
\end{equation}
where $\eta_{kl}(\mathbf{H}_l)\triangleq \mathbb{E}[
        \frac{p_k}{\mu_k \ln 2}\frac{( \sum_{m=1}^M \mathbf{v}_{km}^H \mathbf{h}_{km} )^H}{1 + p_k| \sum_{m=1}^M \mathbf{v}_{km}^H \mathbf{h}_{km} |^2}\,|\, \mathbf{H}_l]$. Then, multiply $\mathbf{v}_{km}$ by $\mathbf{h}_{im}^H$ from both sides and take the expectation:
\begin{equation}
    \begin{aligned}
    &c_{k,i,m}
    = \mathbb{E}\left[ \mathbf{h}_{im}^H \mathbf{v}_{km} \right] \\
    &= -\mathbb{E}\Bigg[ \mathbf{h}_{im}^H \left( \sum_{j=1\backslash k}^K p_j \mathbf{h}_{jm}\mathbf{h}_{jm}^H + \sigma^2 \mathbf{I} \right)^{-1}
        \mathbf{h}_{km} \cdot \eta_{km}(\mathbf{H}_m) \Bigg] \\
    & -\sum_{r \neq k}^K  \sum_{t \neq m}^M p_r c_{k,r,t} \cdot \mathbb{E}\Bigg[ \mathbf{h}_{im}^H \left( \sum_{j \neq k} p_j \mathbf{h}_{jm}\mathbf{h}_{jm}^H + \sigma^2 \mathbf{I} \right)^{-1} \mathbf{h}_{rm} \Bigg].
    \end{aligned}
\end{equation}
Considering the first term on the right-hand side, we first define $\mathbf{W}_{-ki,m} \triangleq \sum_{j \neq k, i} p_j \mathbf{h}_{jm}\mathbf{h}_{jm}^H + \sigma^2 \mathbf{I}$, and then obtain
\begin{equation}
    \sum_{j=1\backslash k}^K p_j \mathbf{h}_{jm}\mathbf{h}_{jm}^H + \sigma^2 \mathbf{I}
    = \mathbf{W}_{-ki,m} + p_i \mathbf{h}_{im}\mathbf{h}_{im}^H.
\end{equation}
Afterward, using the Sherman-Morrison formula {\cite{Horn2013CUP}}, we have
\begin{equation}\label{sm_1}
    \begin{aligned}
        &\left( \mathbf{W}_{-ki,m} + p_i \mathbf{h}_{im}\mathbf{h}_{im}^H \right)^{-1}\\
   & = \mathbf{W}_{-ki,m}^{-1} - \frac{p_i \mathbf{W}_{-ki,m}^{-1}\mathbf{h}_{im}\mathbf{h}_{im}^H\mathbf{W}_{-ki,m}^{-1}}{1 + p_i \mathbf{h}_{im}^H\mathbf{W}_{-ki,m}^{-1}\mathbf{h}_{im}}.    
    \end{aligned}
\end{equation}
Then, we can further derive the term $\mathbf{h}_{im}^H ( \mathbf{W}_{-ki,m} + p_i \mathbf{h}_{im}\mathbf{h}_{im}^H)^{-1} \mathbf{h}_{km}$ in the first term as
\begin{equation}
\begin{aligned}
& \mathbf{h}_{im}^H \mathbf{W}_{-ki,m}^{-1} \mathbf{h}_{km}
    - \frac{p_i \mathbf{h}_{im}^H \mathbf{W}_{-ki,m}^{-1}\mathbf{h}_{im}\mathbf{h}_{im}^H\mathbf{W}_{-ki,m}^{-1} \mathbf{h}_{km}}{1 + p_i \mathbf{h}_{im}^H\mathbf{W}_{-ki,m}^{-1}\mathbf{h}_{im}} \\
    &= \frac{\mathbf{h}_{im}^H \mathbf{W}_{-ki,m}^{-1} \mathbf{h}_{km}}{1 + p_i \mathbf{h}_{im}^H\mathbf{W}_{-ki,m}^{-1}\mathbf{h}_{im}}.
    \end{aligned}
\end{equation}
Since $\mathbf{W}_{-ki,m}$ is independent of $\mathbf{h}_{im}$ and $\mathbf{h}_{km}$, $\mathbf{W}_{-ki,m}^{-1}$ can be regarded as a deterministic matrix given $\mathbf{H}_m \setminus \{\mathbf{h}_{im}, \mathbf{h}_{km}\}$. Under this condition, the integrand $\frac{\mathbf{h}_{im}^H \mathbf{W}_{-ki,m}^{-1} \mathbf{h}_{km}}{1 + p_i \mathbf{h}_{im}^H\mathbf{W}_{-ki,m}^{-1}\mathbf{h}_{im}}$ has odd symmetry with respect to $\mathbf{h}_{im}$: the numerator $\mathbf{h}_{im}^H \mathbf{W}_{-ki,m}^{-1} \mathbf{h}_{km}$ is a linear function (odd function) of $\mathbf{h}_{im}$, while the denominator $1 + p_i \mathbf{h}_{im}^H\mathbf{W}_{-ki,m}^{-1}\mathbf{h}_{im}$ is a quadratic form (even function) of $\mathbf{h}_{im}$, making the overall function an odd function. Since $\mathbf{h}_{im} \sim \mathcal{CN}(0, \mathbf{R}_{im})$ has a distribution symmetric about the origin (i.e., $\mathbf{h}_{im}$ and $-\mathbf{h}_{im}$ follow the same distribution), the expectation of an odd function over a symmetric distribution is zero; therefore, the first term is zero.

For the second term, for the case when $r \neq i$, similarly using the Sherman-Morrison formula, it can be known that the expectation is zero. For the case when $r = i$, we define
\begin{equation}
    \beta_{k,i,m} \triangleq \mathbb{E}\Bigg[ \mathbf{h}_{im}^H \left( \sum_{j=1\backslash k}^K p_j \mathbf{h}_{jm}\mathbf{h}_{jm}^H + \sigma^2 \mathbf{I} \right)^{-1} \mathbf{h}_{im} \Bigg].
\end{equation}
Since the inverse matrix is positive definite and $\mathbf{h}_{im} \neq 0$, we have $\beta_{k,i,m} > 0$. Therefore, we have
\begin{equation}\label{ckim}
    c_{k,i,m} = -p_i  \beta_{k,i,m}\sum\nolimits_{t \neq m}^M c_{k,i,t}.
\end{equation}
To solve the equation system, we first analyze the range of values for $\beta_{k,i,m}$. Using a Sherman-Morrison expansion similar to that in \eqref{sm_1}, we separate $p_i \mathbf{h}_{im}\mathbf{h}_{im}^H$ from the matrix to be inverted (note that the summation term $\sum_{j \neq k}$ includes the case $j=i$):
\begin{equation}    
\mathbf{h}_{im}^H \left( \mathbf{W}_{-ki,m} + p_i \mathbf{h}_{im}\mathbf{h}_{im}^H \right)^{-1} \mathbf{h}_{im}    
= \frac{\mathbf{h}_{im}^H \mathbf{W}_{-ki,m}^{-1} \mathbf{h}_{im}}{1 + p_i \mathbf{h}_{im}^H \mathbf{W}_{-ki,m}^{-1} \mathbf{h}_{im}}.
\end{equation}
Since $\mathbf{W}_{-ki,m} \succ 0$, let $x_{k,i,m} = \mathbf{h}_{i,m}^H \mathbf{W}_{-ki,m}^{-1} \mathbf{h}_{i,m} \geq 0$, then the right side of the equation is $\frac{x_{k,i,m}}{1 + p_i x_{k,i,m}}$. Clearly, for any $x_{k,i,m} \geq 0$ and $p_i > 0$, we have
\begin{equation}
0 \leq \frac{x_{k,i,m}}{1 + p_i x_{k,i,m}} < \frac{1}{p_i}.
\end{equation}
Taking expectation on both sides, we have $0 \leq \beta_{k,i,m} < \frac{1}{p_i}$, which implies $0 \leq p_i \beta_{k,i,m} < 1$. Now let $S_{k,i} \triangleq \sum_{t=1}^M c_{k,i,t}$, then $\sum_{t \neq m}^M c_{k,i,t} = S_{k,i} - c_{k,i,m}$. Substituting this into \eqref{ckim} for $c_{k,i,m}$, we have
\begin{subequations}
    \begin{align}
    &c_{k,i,m} = -p_i \beta_{k,i,m} (S_{k,i} - c_{k,i,m}),\\
    &\Rightarrow c_{k,i,m} = \frac{-p_i \beta_{k,i,m}}{1 - p_i \beta_{k,i,m}} S_{k,i}.
\end{align}
\end{subequations}
Summing over all $c_{k,i,m}$ for all $m\in\{1, \dots,M\}$:
\begin{equation}
    S_{k,i} = \sum_{l=1}^M c_{k,i,m} = S_{k,i} \sum_{m=1}^M \frac{-p_i \beta_{k,i,m}}{1 - p_i \beta_{k,i,m}}.
\end{equation}
Rearranging, we obtain
\begin{equation}
    S_{k,i} \left( 1 + \sum_{m=1}^M \frac{p_i \beta_{k,i,m}}{1 - p_i \beta_{k,i,m}} \right) = 0.
\end{equation}
Since $0 \leq p_i \beta_{k,i,m} < 1$, the term in parentheses is strictly greater than $1$. Therefore, the only possible solution is $S_{k,i} = 0$. Then, by the relationship between $c_{k,i,m}$ and $S_{k,i}$, it follows that $c_{k,i,m} = 0, \forall m$.

In summary, for any $i \neq k$ and $m \neq l$, we have $\mathbb{E}\left[ \mathbf{h}{im}^H \mathbf{v}{km} \right] = 0$. This completes the proof.\hfill$\blacksquare$

\end{appendices}

\bibliographystyle{IEEEtran}

\bibliography{ref}

\end{document}